\def\BibTeX{{\rm B\kern-.05em{\sc i\kern-.025em b}\kern-.08em
    T\kern-.1667em\lower.7ex\hbox{E}\kern-.125emX}}
\newcommand{\MODELNAME}{LMSFC\xspace}
\newcommand{\bplustree}{$B^+$-tree\xspace}
\newcommand{\md}{multi-dimen\-sional\xspace}
\newcommand{\Md}{Multi-dimen\-sional\xspace}
\newcommand{\oned}{one-dimensional\xspace}
\newcommand{\zaddress}{$z$-address\xspace}
\newcommand{\qmin}{q_{\min}}
\newcommand{\qmax}{q_{\max}}
\newcommand{\SMBO}{SMBO\xspace}
\newcommand{\coord}[2]{#1^{(#2)}} % \coord{x}{i}
\newtheorem{definition}{Definition}
\newcommand{\myparagraph}[1]{\noindent\textbf{#1}\hspace{1em}}
\newcommand{\ZM}{\textsf{ZM-index}\xspace}
\newcommand{\zm}{\ZM}
\newcommand{\rstartree}{\textsf{$R^{*}$-tree}\xspace}
\newcommand{\rtree}{$R$-tree\xspace}
\newcommand{\Flood}{\textsf{Flood}\xspace}
\newcommand{\RSMI}{\textsf{RSMI}\xspace}
\newcommand{\LISA}{\textsf{LISA}\xspace}
\newcommand{\TSUNAMI}{\textsf{Tsunami}\xspace}
\newcommand{\RQS}{\textsf{RQS}\xspace}
\newcommand{\FNZ}{\textsf{FNZ}\xspace}
\newcommand{\zorder}{$z$-order\xspace}
\newcommand{\mycomment}[1]{}
\newcommand{\opt}[1]{#1^{*}}
\newtheorem{example}{Example}
\newcommand\vldbdoi{10.14778/3603581.3603598}
\newcommand\vldbpages{2605 - 2617}
\newcommand\vldbvolume{16}
\newcommand\vldbissue{10}
\newcommand\vldbyear{2023}
\newcommand\vldbauthors{Jian Gao, Xin Cao, Xin Yao, Gong Zhang and Wei Wang}
\newcommand\vldbtitle{\shorttitle} 
\newcommand\vldbpagestyle{plain} 
\begin{document}

\title{{\MODELNAME}: A Novel Multidimensional Index based on Learned Monotonic Space Filling Curves (Extended Version)}

%%
%% The "author" command and its associated commands are used to define the
%authors and their affiliations.
\author{Jian Gao}
\affiliation{%
  \institution{UNSW, Australia}
}
\email{jian.gao2@unsw.edu.au}

\author{Xin Cao}
\affiliation{%
  \institution{UNSW, Australia}
}
\email{xin.cao@unsw.edu.au}

\author{Xin Yao}
\affiliation{%
  \institution{Huawei Theory Lab, China}
}
\email{yao.xin1@huawei.com}

\author{Gong Zhang}
\affiliation{%
  \institution{Huawei Theory Lab, China}
}
\email{nicholas.zhang@huawei.com}

\author{Wei Wang$^{1,2}$}
\affiliation{%
  \institution{$^{1}$DSA \& Guangzhou Municipal Key Laboratory of Materials Informatics, HKUST (Guangzhou), China}
  \institution{$^{2}$HKUST, HKSAR, China}
}
\email{weiwcs@ust.hk}

\begin{abstract}

The recently proposed learned indexes have attracted much attention as they
  can adapt to the actual data and query distributions to attain better search 
  efficiency. Based on this technique, several existing works build up indexes for \md{} data and achieve improved query performance. A common paradigm of these works is to (i) map \md{} data points to a one-dimensional space using a fixed space-filling curve (SFC) or its variant and (ii) then apply the learned indexing techniques. We notice that the first step typically uses a fixed SFC method, such as row-major order and \zorder{}.
  It definitely limits the potential of learned \md{} indexes to adapt variable data distributions via different query workloads.

  In this paper, we propose a novel idea of learning a space-filling curve
  that is carefully designed and actively optimized for efficient query processing.
  We also identify innovative offline and online optimization opportunities common to SFC-based learned indexes and offer optimal and/or heuristic solutions. 
  Experimental results demonstrate that our proposed method, \MODELNAME, outperforms state-of-the-art non-learned or learned methods across three commonly used real-world datasets and diverse experimental settings.

\end{abstract}

\maketitle

\pagestyle{\vldbpagestyle}
\begingroup\small\noindent\raggedright\textbf{PVLDB Reference Format:}\\
\vldbauthors. \vldbtitle. PVLDB, \vldbvolume(\vldbissue): \vldbpages, \vldbyear.\\
\href{https://doi.org/\vldbdoi}{doi:\vldbdoi}
\endgroup
\begingroup
\renewcommand\thefootnote{}\footnote{\noindent
This work is licensed under the Creative Commons BY-NC-ND 4.0 International License. Visit \url{https://creativecommons.org/licenses/by-nc-nd/4.0/} to view a copy of this license. For any use beyond those covered by this license, obtain permission by emailing \href{mailto:info@vldb.org}{info@vldb.org}. Copyright is held by the owner/author(s). Publication rights licensed to the VLDB Endowment. \\
\raggedright Proceedings of the VLDB Endowment, Vol. \vldbvolume, No. \vldbissue\ %
ISSN 2150-8097. \\
\href{https://doi.org/\vldbdoi}{doi:\vldbdoi} \\
}\addtocounter{footnote}{-1}\endgroup
%%% VLDB block end %%%

%%% do not modify the following VLDB block %%
%%% VLDB block start %%%
% \ifdefempty{\vldbavailabilityurl}{}{
% \vspace{.3cm}
% \begingroup\small\noindent\raggedright\textbf{PVLDB Artifact Availability:}\\
% The source code, data, and/or other artifacts have been made available at \url{\vldbavailabilityurl}.
% \endgroup
% } 
%%% VLDB block end %%%

\section{Introduction}
\label{sec:introduction}

Nowadays, there are large volumes and a huge variety of \md{} data. For example,
in traditional data warehouses and analytical databases, the majority of key
data is stored in the \emph{\md{}} fact table. The wide deployments of location-based services and sensors, such as Google
Maps, generate huge amounts of \emph{\md{}} data. The data is typically two or
three spatial dimensions, and one or several dimensions for various
measurements. 
The common and dominant type of query over these \md{} datasets is the window
query, which imposes range constraints on several or all the dimensions.

\Md{} indexes are essential in answering window queries efficiently for a large
volume of \md{} datasets. Previous studies have proposed many traditional
indexes, including \rtree~\cite{rtree}, kd-tree~\cite{kdtree}, and
Quadtree~\cite{quadtree}. They are all based on spatial partitioning, while the
major difference is whether the overlapping between partitions exists or not.

A space-filling curve is one of the most commonly used methods in \md{}
indexes~\cite{kamel1993hilbert, DBLP:conf/vldb/RamsakMFZEB00}. This is because
SFCs have excellent proximity-preserving properties, making them ideal for
linearizing data objects. SFCs can be classified into two categories: monotonic SFCs and non-monotonic
SFCs. Monotonic SFCs, such as \zorder~\cite{morton}, enable quick location of
the search range. However, non-monotonic SFCs may result in more computational
overhead during query processing. For instance, Hilbert curve
~\cite{hilbert1935stetige} requires the enumeration of all values on the
boundary of the query window to determine the search range. Therefore, it is
more difficult to perform range searches when using non-monotonic SFCs as the
linearization method.

Recently, initiated by the seminal work~\cite{rmi}, there is a surge in
optimizing database indexing via machine learning. It takes unique advantage of
optimizing for specific data and query workload instances. As a result, several
works have investigated the learned \md{} index. The prevalent approach is to
map the \md{} data points into one dimension, and further apply a learned index
on the one-dimensional space.

However, they have the following limitations: 
\begin{inparaenum}[(1)]
\item The unique and critical part of multi-dimensional indexes is mapping from
multi-dimensional space to one-dimensional space, while this is not learned or
well-learned. Most methods~\cite{ZM,RSMI} exploit an existing SFC, such as
\zorder~\cite{morton}, as it possesses good proximity-preserving capabilities.
However, a fixed SFC does not necessarily work the best on a given dataset
instance. Other works, such as \Flood~\cite{flood}, only learn to select a
special dimension and then follow the fixed row-major orders on the rest of the
$d-1$ dimensions, hence missing the opportunity to better preserve local
proximities. 
\item The physical layout of the linearized data points, which are stored as
disk pages, is not fully optimized. Existing methods typically pack a fixed
amount of data points into each page, which may cause the Minimum Bounding
Rectangles (MBRs) of the resulting pages to contain much dead space or heavily
overlap with each other~\cite{DBLP:conf/icde/SidlauskasCZA18}.
\item Even if the above two issues can be mitigated at index construction time,
existing query processing methods still need to visit many pages because their
MBRs or \zaddress{} ranges  \emph{inevitably} overlap with that of the query.
\end{inparaenum}

In this paper, we provide a thorough and rigorous investigation of learned \md{}
indexes and propose LMSFC to address the above limitations. 
\begin{inparaenum}[(i)]
\item It is a challenging task to formulate a suitable family of parameterized
  SFCs that can be efficiently learned and possess salient properties for
  efficient query processing. For this, we design a learnable monotonic SFC
  family. Based on the proposed SFC family, we devise an effective solution
  based on the SMBO to learn an optimal/sub-optimal SFC to adapt to different
  data distributions and workloads. \item We study the physical storage
  optimization issue of packing \md{} data points into size-limited disk pages
  in a principled fashion. 
  Thanks to the linearization due to the (learned) SFC, we are able to solve the
  \emph{otherwise} NP-hard problem optimally by dynamic programming. We further
  propose a heuristic algorithm that trades the optimality for improved
  practical speed, which is suitable for large-scale datasets. 
\item In addition to the above offline optimization techniques, we further
  exploit the unique online query optimization opportunity by proposing a query
  splitting strategy. We demonstrate that our learned SFC lends itself to an
  efficient algorithm of splitting the query into two such that the total access
  to false negative data pages is minimized; this algorithm is then extended to
  allow multiple splits via recursion.
\end{inparaenum}

The contributions of the paper are summarized below:

\begin{enumerate}[(1)]
 \item As far as we are aware, LMSFC is the \emph{first} work to consider
  learning a space-filling curve that is directly optimized for the least cost
  query processing on a given instance of the dataset and query workload.

\item Based on the learned SFC, we propose both offline and online optimization
  techniques. For the offline optimization, we can optimally or sub-optimally
  pack \md{} data points into pages that minimize a density based cost function.
  For the online optimization, we propose recursively splitting the query into
  sub-queries such that it minimizes the access to pages that do not contain any
  potential data points for a query. 

\item We compare LMSFC with previous state-of-the-art \md{} indexes in our
  experimental evaluation. LMSFC achieves the best query performance on three
  real-world datasets under varying query selectivity, data size, and query
  aspect ratio.

  \MODELNAME{} can achieve up to 38.2$\times$, 7.2$\times$, and 2.0$\times$
  speedup against \rstartree{}, \ZM{}, and \Flood{}, respectively. 

\end{enumerate}

The rest of this paper is organized as follows. %
In Section~\ref{sec:preliminaries}, we define the problem setting and introduce
some key notations and concepts. Then we review the related literature in
Section~\ref{sec:related}. Section~\ref{sec:fram-learn-sfcs} starts by
investigating the requirement of range query processing for an SFC-based index
and motivates a parameterized SFC family. We then overview the proposed
\MODELNAME{} index based on the parameterized SFC family and introduce key steps
in its construction in Section~\ref{sec:index-construction}. We motivate and
introduce our query processing method with the novel query splitting strategy in
Section~\ref{sec:query-processing}. Section~\ref{sec:exp} presents our
experimental results and analyses. Finally, we conclude the paper and discuss a
few extensions in Section~\ref{sec:conclusion}.

\section{preliminaries}
\label{sec:preliminaries}

\subsection{Problem Definition}

In this paper, we mainly focus on exact query processing of window queries on a
multi-dimensional dataset. 
\begin{definition}[Multi-dimensional Dataset]
  A \md{} dataset $D$ consists of $n$ points in a $d$-dimensional
  Euclidean space. Each point $x \in D$ can be denoted as $(x^{(1)},\cdots,x^{(d)})$
  where $x^{(i)} \in \mathbb{R}^d$ is the $i$-th dimensional value of $x$.
\end{definition}

Without loss of generality, we assume that, with proper scaling, the domain of
each coordinate is an integer within $[0, 2^{K} - 1]$, hence every dimension
value can be represented using $K$ binary bits.

A \md{} window is a hyper-rectangle in the $d$-di\-mensional 
space, or formally as $w = [x^{(1)}_{L}, x^{(1)}_{U}] \times \cdots \times [x^{(d)}_{L},
x^{(d)}_{U}]$, where $\coord{x}{i}_{L} \leq \coord{x}{i}_{U}$.

\begin{definition}[Multi-dimensional Window Query]
  Given a \md{} dataset $D$, a \md{} window query $q$ with the window constraint
  $q.w$ returns the set of points $x$ from $D$ that is located inside the window
  $q.w$,
  i.e.
  $R(q) = \set{x \mid x \in D \land \forall{1 \leq i \leq d}, \coord{q.w_L}{i} \leq
    \coord{x}{i} \leq \coord{q.w_U}{i}}$.
\end{definition}
As a query is uniquely characterized by its query window, we will use $q$ and
$q.w$ interchangeably hereafter.

\subsection{Notations}

Given a \md{} window $w$, it is uniquely characterized by $(w_L, w_U)$, i.e.,
the lower-bound and upper-bound points are defined as
$w_L = (x^{(1)}_{L}, \cdots, x^{(d)}_{L})$ and
$w_U = (x^{(1)}_{U}, \cdots, x^{(d)}_{U})$, respectively. These apply to the
window constraint of the query window of $q$ too, and we will use the shorthand $q_L$ to
denote $q.w_L$.

Given a set of points, we define the \md{}al \emph{Minimum Bounding Rectangle} (MBR) 
as the smallest window that encloses the set of points.

For a binary integer $v$, we denote the $j$-th bit of the binary representation
of $v$ as $v_j$. Note that $j$ starts from 0, which corresponds to the
right-most bit of $v$. The most significant bit of $v$ is the bit set to 1 and
with the maximum bit index. For example, let
$v = \mathtt{(00101101)_2}$ (we use $()_2$ to represents binary strings),
$v_2 = \mathtt{1}$ and the most significant bit of $v$ is 5.

Table~\ref{tab:notation} lists frequently used notations.

\begin{table}[tb]
  \caption{Table of Notations}
  \label{tab:notation}
  \small
  \begin{tabular*}{\linewidth}{@{\extracolsep{\fill} }  p{12mm} p{67mm} }
    \toprule
    Notation   &  Description\\
    \midrule
    $D$        &  A set of \md data records\\\hline
    $d$        &  The dimensionality of $D$ \\\hline
    $x$, $\coord{x}{i}_{j}$  &  A \md data point, and the $j$-th bit of the $i$-th dimension value of $x$ \\\hline
    $q$, $q_L, q_U$ &  A \md window query, and its lower-bounding and upper-bounding points, respectively\\\hline
    $f$        &  A learned SFC's mapping function \\\hline
    $\theta$   &  The parameters of $f$ \\\hline
    $K$        &  The maximum number of bits to represent coordinate values of $x$\\

    \bottomrule
  \end{tabular*}
\end{table}

\section{Related Work}
\label{sec:related}

Indexes are essential for processing queries on large datasets. Traditional
indexes are optimized for the worst-case performance. More importantly, they
miss the opportunity to exploit statistical information about the data and query
workloads to optimize their index structure and physical layout. Recently, many
Machine Learning-based indexes have been proposed, which achieve smaller index
sizes and/or faster query processing speed. %
RMI~\cite{rmi} is a well-known work that first notices the similarity between
the exact search on one-dimensional array of non-descending values and the
classic regression problem. It then proposes several instances of learned
indexes, which use a complex model to predict the logical location of the
targeted key and then perform a local search to fix possible errors bought by
the ML model. Later works further improve the model's performance or consider other variants.
For instance, PGM~\cite{pgm} lets the user specify the error bound a priori and
then uses simpler linear regression models with optimal segmentation algorithms
to construct the learned index. Fiting-tree~\cite{fittingtree} also uses a
tunable error parameter to tradeoff the index size for lookup performance. Given
a dataset, Fiting-tree applies a cost model to estimate the space consumption
and latency to find a suitable error bound. Radix
Spline~\cite{DBLP:conf/sigmod/KipfMRSKK020} considers a linear spline to
approximate CDF. The prefixes of the selected spline points are stored in an
auxiliary radix table to accelerate the search process. ALEX~\cite{ding2020alex}
and LIPP~\cite{DBLP:journals/pvldb/WuZCCWX21} supports data updates. LIPP
further eliminates the local search via a novel adjustment strategy to
redistribute keys in each node. SOSD~\cite{DBLP:journals/corr/abs-1911-13014,
DBLP:journals/pvldb/MarcusKRSMK0K20} proposes some benchmarks to evaluate
different learned indexes.

Existing learned index techniques cannot be directly applied to \md{} data, as
there is no natural ordering among \md{} data points. Currently, the prevalent
approach is to apply a \emph{linearization} method to convert the problem into a
one-dimensional search problem, on which existing learned indexes can then be
applied. \zorder~\cite{ZM,RSMI} and row-major order~\cite{flood, li2020lisa} are
most widely used. Other linearization methods use learned linear or non-linear
mapping~\cite{DBLP:conf/icde/LiZSWT020}, e.g., clustering followed by the
distance to the cluster center~\cite{ML}. To reduce the challenges of query
skewness and data correlations, Tsunami~\cite{Tsunami} extends Flood via
partitioning data space and modeling conditional CDFs.
Qd-tree~\cite{DBLP:conf/sigmod/YangCWGLMLKA20} utilizes Reinforcement Learning
to optimize the space partitioning. SPRIG~\cite{DBLP:conf/ssd/ZhangRLZ21} uses a
spatial interpolation function to locate the search range in the grid.
~\cite{IOZ} learns a quadtree structure and applies a \zorder variant on each node. ~\cite{10.1145/3035918.3035934} extends \zorder's bit interleaving method to more general cases and proposes the idea of learning SFC. In their work, they employ a heuristic approach to identify an SFC that is both query-aware and skew-tolerant for a given query pattern. Diverging from their approach,  we leverage ML methods to directly optimize query performance under a given workload of queries to find an SFC with the least estimated query time. 

Similar to most space-filling curve-based approaches, an orthogonal aspect for
learned \md indexes is to preprocess the data using simple linear
transformations or sophisticated non-linear transformations, such as the Rank
Space transformation~\cite{RSMI}.

Traditional \md indexes often recursively decompose the space into disjoint or
overlapping partitions. Grid File~\cite{gridfile}, kd-tree~\cite{kdtree} and
Quadtree~\cite{quadtree} are typical examples of the former category, and
\rtree~\cite{rtree} and its variants~\cite{rstartree, sellis1987r+,
kamel1993hilbert} are typical for the latter category. %
Since \rtree~\cite{rtree} variants have been widely adopted in commercial
systems, there are also proposals to integrate learning into \rtree{}.
AI+R~\cite{MDM2022} employs an ML model to predict the set of leaf nodes for a
window query, together with a backup \rtree{}.
\cite{DBLP:journals/corr/abs-2103-04541} aims at learning the key subtree
splitting routine in \rtree{} construction and adapts to the problem instance.
In order to reduce the search range on each leaf node, \cite{ifindex} embeds an
ML model on the selected sort dimension to accelerate the search procedure.

There are other ways of integrating learning into database indexing.
\cite{DBLP:conf/iclr/DongIRW20} uses ML models to learn a balanced space
partition that preserves spatial proximity well.
LIMS~\cite{DBLP:journals/corr/abs-2204-10028} adopts an ML-based data clustering
method to solve similarity search  in metric spaces.

\section{A Framework for Learned SFCs}
\label{sec:fram-learn-sfcs}

In this section, we first summarize window query processing issues for an SFC to
motivate us to design a family of parameterized SFCs that possess salient
properties for query processing.

\subsection{Window Query Processing with SFCs}

A space-filling curve (SFC) is a method of mapping the \md{} data space into the
\oned data space. As we assume the data points have integer coordinate values
within $[0, 2^K -1]$, this naturally leads to a regular partitioning of \md{}
space into $2^{Kd}$ possible points, or \emph{cells} (SFCs can also be applied
on a coarser granularity. 
  E.g., \Flood{}~\cite{flood} can be deemed as using a fixed SFC on
  \emph{grids}). 
an SFC is a bijective function $f$ between these cells and integers within $[0,
2^{Kd} - 1]$. 
We call $x$ in the \md{} space the \emph{original address} and $f(x)$ as its
corresponding \emph{\zaddress{}} with respect to an SFC $f$. 
Intuitively, as $f(x)$ is a one-dimensional integer, it specifies a way to
traverse all the cells exactly once. SFCs are known for their ability to
preserve the \md{} proximity in the linear
order~\cite{DBLP:journals/sigmod/LawderK01}. 
Hence, they are widely used, especially in applications with a need to linearize
\md{} data such as images, tables and spatial data. Some well-known SFCs are
Hilbert curve, Z-order curve, and Gray curve.

To answer a range query $q$ on $D$, assuming that points in $D$ have been mapped
to the corresponding {\zaddress}es, we can 
\begin{inparaenum}[(i)]
\item compute the query's \oned \zaddress{} range $q_z$, % = [\zmin, \zmax]
\item retrieve every point whose \zaddress{} falls within the interval $q_z$,
and 
\item filter out those points that do not fall into the query window $q$. 
\end{inparaenum} 
The tightest \zaddress{} range can be defined as: $[\min_{x \in q} f(x), \max_{x
\in q} f(x) ]$. However, computing these extreme values is difficult in general.
In the worst case, we may need to enumerate all $x$ within $q$, hence with a
cost proportional to the \emph{volume} of the query \emph{and} beating the
purpose of efficient query processing. For certain SFCs with better properties,
such as the Hilbert curve, we still need to enumerate all $x$ on the boundary of
the query window, hence inducing a cost proportional to the \emph{circumference}
of the query.

Nevertheless, we identify a subclass of SFCs such that the above minimization
and maximization can be computed efficiently in $O(d)$ time and hence do not
depend on the size of the query. 

\subsection{Monotonic Space-Filling Curves}

We will first define the criterion that the mapping function of an SFC is
monotonic, and show that the tightest \zaddress range can be  
computed from the lower and upper bounding points of the query window.

\begin{definition}[Monotonic Function in a \Md{} Space]
  Let $a \preceq b$ defined as true if and only if $\forall i, \coord{a}{i} \leq
  \coord{b}{i}$. %
  Then a function $g$ is monotonic, if for all $a$ and $b$, if $a \preceq b$,
  then $g(a) \leq g(b)$. 
\end{definition}

\begin{thm}\label{thm:z-range} If an SFC corresponds to a monotonic mapping
  function $f$, given a spatial query rectangle $q$, the query result $r
  \subseteq \set{x \mid x \in D \land f(q_L) \le f(x) \le f(q_U)}$. In other
  words, the tightest \zaddress{} range of $q$ can be efficiently computed as
  $[f(q_L), f(q_U)]$. 
\end{thm}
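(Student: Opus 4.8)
The plan is to prove the two assertions in turn: first the set-containment $r \subseteq \{x \mid x \in D \land f(q_L) \le f(x) \le f(q_U)\}$, and then the tightness of the interval $[f(q_L), f(q_U)]$; together these justify computing the range from only the two corner points of the query window.

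First I would establish the containment directly from the definition of a window query together with the monotonicity of $f$. Take any $x \in r$. By the definition of the window query, $\coord{q_L}{i} \le \coord{x}{i} \le \coord{q_U}{i}$ holds for every dimension $1 \le i \le d$. In the partial-order notation of the monotonicity definition this is precisely $q_L \preceq x$ and $x \preceq q_U$. Applying the monotonicity hypothesis to each of these two relations yields $f(q_L) \le f(x)$ and $f(x) \le f(q_U)$, i.e.\ $f(q_L) \le f(x) \le f(q_U)$. Since $x$ was arbitrary in $r$, the containment follows, which is an immediate one-line consequence requiring no extra machinery.

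Next I would argue tightness. Recall the tightest \zaddress{} range was defined as $[\min_{x \in q} f(x), \max_{x \in q} f(x)]$, so it suffices to show $\min_{x \in q} f(x) = f(q_L)$ and $\max_{x \in q} f(x) = f(q_U)$. The key observation is that the corner points $q_L$ and $q_U$ are themselves cells of the window $q$ (their coordinates are the integer bounds of the window) and are, respectively, the $\preceq$-least and $\preceq$-greatest elements of $q$: every $x \in q$ satisfies $q_L \preceq x \preceq q_U$. By the same application of monotonicity as above, $f(q_L)$ lower-bounds and $f(q_U)$ upper-bounds $f(x)$ over all $x \in q$; and because $q_L, q_U \in q$, these two bounds are actually attained. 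Hence the minimum and maximum coincide with $f(q_L)$ and $f(q_U)$, so the interval is exactly the tightest range.

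Finally, the efficiency claim follows for free: the range is obtained by evaluating $f$ at only the two points $q_L$ and $q_U$, independent of the volume or circumference of $q$; for an SFC whose mapping is computed by bit manipulation this costs $O(d)$ time. The main (and only genuinely non-routine) point is the tightness half, specifically the recognition that the window's extreme corners lie inside the window and serve as the least and greatest elements under $\preceq$, so that monotonicity simultaneously sandwiches every \zaddress{} and guarantees both bounds are achieved.
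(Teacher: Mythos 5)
Your proof is correct and takes essentially the same approach as the paper: both arguments rest on observing that $q_L$ and $q_U$ are the $\preceq$-least and $\preceq$-greatest points of the window and then applying monotonicity of $f$ to sandwich every $f(x)$ in $[f(q_L), f(q_U)]$. Your write-up is slightly more explicit than the paper's in separating the containment claim from tightness and in noting that the bounds are attained because $q_L, q_U \in q$, but this is an elaboration of the same argument rather than a different route.
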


  \begin{proof}
    Let $\qmin \definedas \min \set{x \in q}$ with respect to $\preceq$ (i.e.,
    $\qmin$ is $q_L$). Then by definition for any $x \in q, \qmin \preceq x$. As
    $f$ is monotonic, then $f(\qmin) \leq f(x)$. Similarly, we can show that
    $\qmax \definedas \max \set{x \in q}$ (i.e., $\qmax$ is $q_U$) and for any
    $x \in q, f(x) \leq f(\qmax)$. 
  \end{proof}

\begin{example}
  Among three commonly used SFCs, only the \zorder{} curve has the monotonic
  property. We give counter-examples for the Hilbert curve and the Gray curve in
  Figure~\ref{fig:non-monotone}.
  \begin{figure}[htbp]
    \centering
    \includegraphics[width=\linewidth]{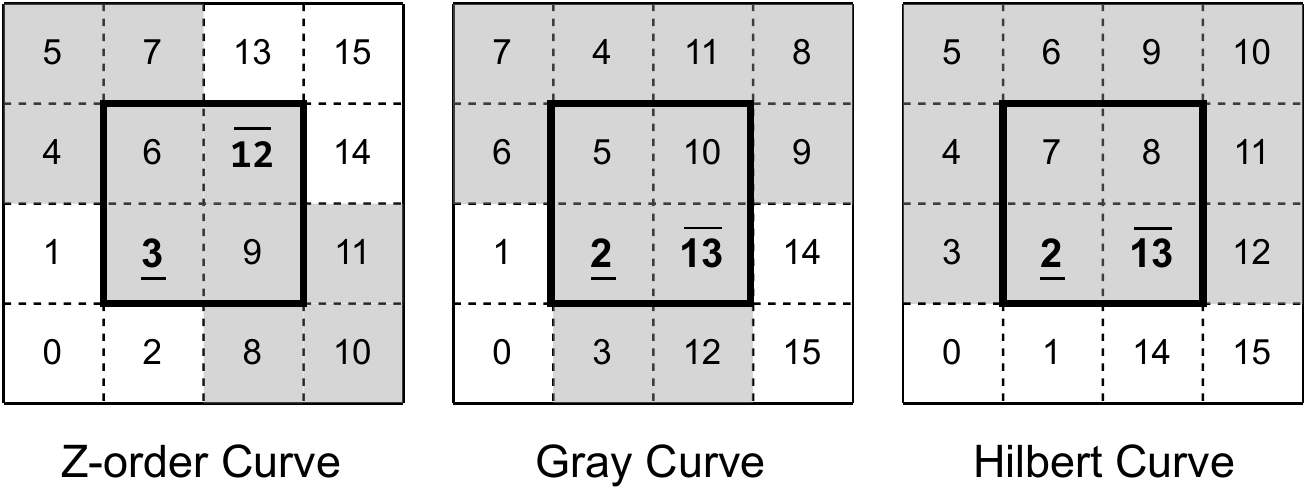}
    \caption{Hilbert and Gray Curves are not Monotonic (Bold Black Rectangle is
      the Query Window; the Tightest \zaddress{} Ranges are Marked in Bold
      Font)}
    \label{fig:non-monotone}
  \end{figure}
\end{example}

\subsection{Parameterized Z-Order SFCs}

Inspired by the monotone property of the \zorder curve, we identify a family of
monotonic SFCs that generalizes the \zorder{} curve. In addition, any instance
within the family has favorable properties to enable efficient query processing. 

We consider the following SFC parameterized by a parameter $\theta =
[\coord{\theta}{1}, \ldots, \coord{\theta}{d}]$: 

\begin{align}
  f(x; \theta) = \sum_{i=1}^{d} \sum_{j=1}^{K} \coord{\theta}{i}_{j} \cdot
  \coord{x}{i}_{j},
  \label{eq:LSFC-framework}
\end{align}
where each $\coord{\theta}{i}$ is a $K$-dimensional vector, $\coord{x}{i}_{j}$
represents the $j$-th bit of the $i$-th dimension value of $x$, $d$ is the dimensionality and $K$ is the maximum number of bits for
$\coord{x}{i}$. In fact, $\coord{\theta}{i}_{j}=2^l$ indicates that
$\coord{x}{i}_{j}$ will be mapped to the $(l+1)$-th bit of the binary
representation of $f(x; \theta)$. If the content is clear, we use $f(x)$ for
short instead of $f(x; \theta)$.

\begin{figure}[tb]
  \centering
  \subfigure[Original Z-order]{
    \includegraphics[width=.29\linewidth]{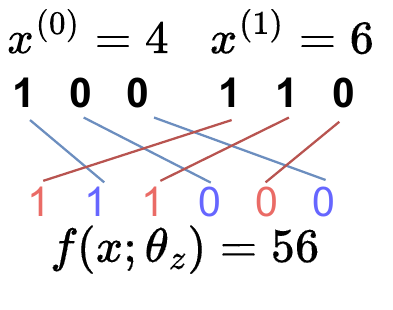}
  
  } \subfigure[Generalized Z-order]{
    \includegraphics[width=.29\linewidth]{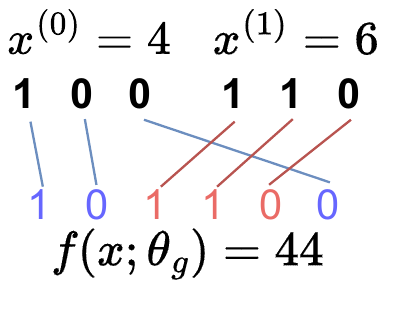}

  } \subfigure[Column-major order]{
    \includegraphics[width=.29\linewidth]{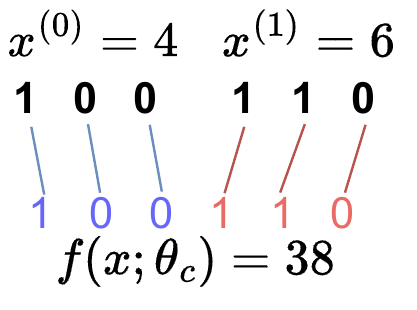}

  } 
 
  \caption{{\zaddress{}} Calculation for Several SFCs within our Parameterized
  SFC Family}

  \label{fig:framework_bit_level}
 
\end{figure}

\begin{example}
  Figure~\ref{fig:framework_bit_level} demonstrates several instances of the
  SFCs within our parameterized family. Figure~\ref{fig:framework_bit_level}(a)
  shows the ordered bit-interleaving way of \zorder to compute $f(x)$ for the
  2-dimensional data point $x = (4, 6)$. Here, $K=3$, and its parameter
  $\theta_z = [\coord{\theta}{1}, \coord{\theta}{2}] = [[1, 4, 16], [2, 8,
  32]]$, and the resulting \zaddress is 56. %
  Figure~\ref{fig:framework_bit_level}(b) demonstrates a new SFC for the same
  data point, but with $\theta_g = [[1, 16, 32], [2, 4, 8]]$, $f(x; \theta_g) =
  44$.
  Finally, Figure~\ref{fig:framework_bit_level}(c) demonstrates yet another SFC,
  which is known as the column-major order, with $\theta_c = [[8, 16, 32],[1, 2,
  4]]$, $f(x; \theta_c) = 38$.
\end{example}

To ensure the family of SFCs preserves the monotonic and bijective properties,
it suffices to impose the following constraints on $\coord{\theta}{i}_{j}$s:
\begin{enumerate}
\item $\coord{\theta}{i}_{j} \in \set{2^0, \ldots, 2^{Kd-1}}$.
\item $\forall {i, j, i', j'}, i \neq i' \lor j \neq j'$, then
  $\coord{\theta}{i}_{j} \neq \coord{\theta}{i'}_{j'}$. 
\item $\forall {j < j'}, \coord{\theta}{i}_{j} < \coord{\theta}{i}_{j'}$. 
\end{enumerate}

  \begin{proof}
    According to the first two constraints, $f(x)$ can be represented as a
    binary form that is actually the combination of the bits from all dimensions
    of a multidimensional data point $x$. Thus, $f(x)$ can be uniquely
    determined by given $x$ and vice versa. So we can conclude $f$ is bijective. 
    Next, we prove $f(x)$ is monotonic. Assuming we have two different
    multidimensional data points $p$ and $q$, the resulting mapped addresses are
    $f(p)$ and $f(q)$ respectively. When $p \preceq q$, according to
    Definition~3, on any dimension $i$ we have $p^{(i)} \leq q^{(i)}$. If
    $p^{(i)} < q^{(i)}$ on dimension $i$, we denote the left-most bit where
    $p^{(i)}$ and $q^{(i)}$ are different by $l$, so we have $p^{(i)}_l=0$ and
    $q^{(i)}_l=1$. The third constraint can ensure the bit order of each
    dimension of a data point $x$ in $f(x)$ remains unchanged. Thus, in the
    binary forms of $f(p)$ and $f(q)$, the left-most different bit is also 0 and
    1, respectively, and we can conclude that $f(p) \le f(q)$, which shows that
    the function $f$ is monotonic.    
  \end{proof}

Both bijective and monotonic properties are \emph{important conscious choices}
in our work to balance (1) the potential of exploiting the spatial locality via
learning an SFC and (2) retaining properties that facilitate efficient query
processing. Violating any constraint breaks the properties of the mapping
function. As an example, consider $\theta = [[1, 4], [2, 10]]$, where the first
constraint is violated. Take $f(x) = 8$. There is no point that can be mapped to
this z-address, indicating that the mapping function is injective but not
bijective. In another example, when $\theta = [[1, 4], [2, 2]]$, the second
constraint is not satisfied. Consider data points $A = (1, 1)$ and $B = (1, 2)$,
then $f(A) = f(B) = 3$. Therefore, the mapping function is not bijective.
Lastly, when $\theta = [[1, 4], [8, 2]]$, the third constraint is violated.
Given $A = (1, 1)$ and $B = (2, 2)$, then $f(A) = 9 > f(B) = 6$, but $A \preceq
B$ according to Definition 3. Therefore, the mapping function is not monotonic.

We note that the Z-order curve is hence a special instance of the above
monotonic SFC. %
Specifically, it corresponds to $\coord{\theta}{i}_{j} = 2^{(j-1) \cdot d +
(i-1)}$. %
We also note that computing the above $f(\cdot)$ is efficient as $f(x)$ can be
computed by ``scrambling'' the bits of $x$ according to $\theta$ using bit
operations efficiently.

\begin{figure*}[htbp]
  \centering
  \subfigure[Row-major Order]{
    \includegraphics[width=.25\linewidth]{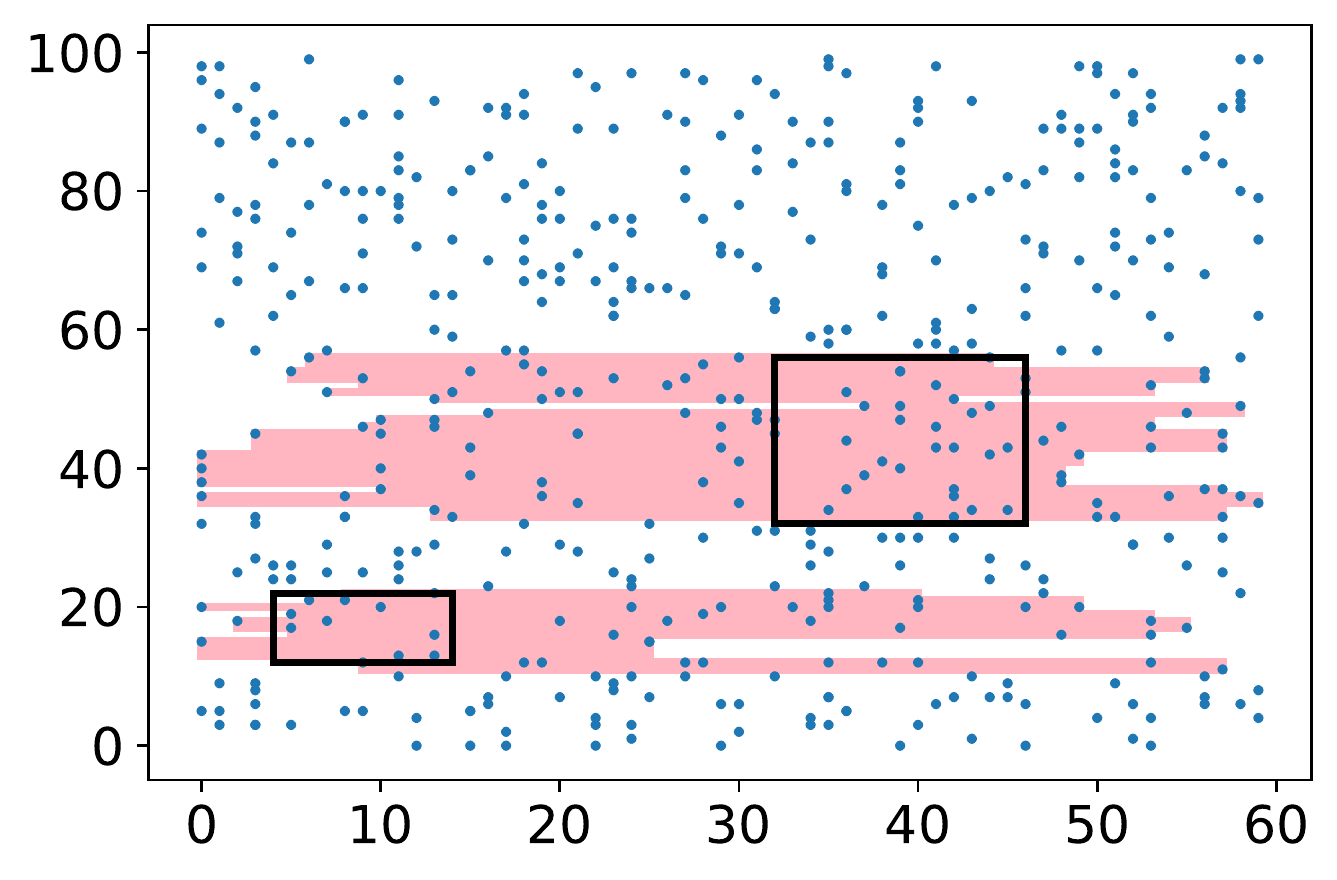}
    \label{fig:cmp-sfc-row}
  } \subfigure[Column-major Order]{
    \includegraphics[width=.25\linewidth]{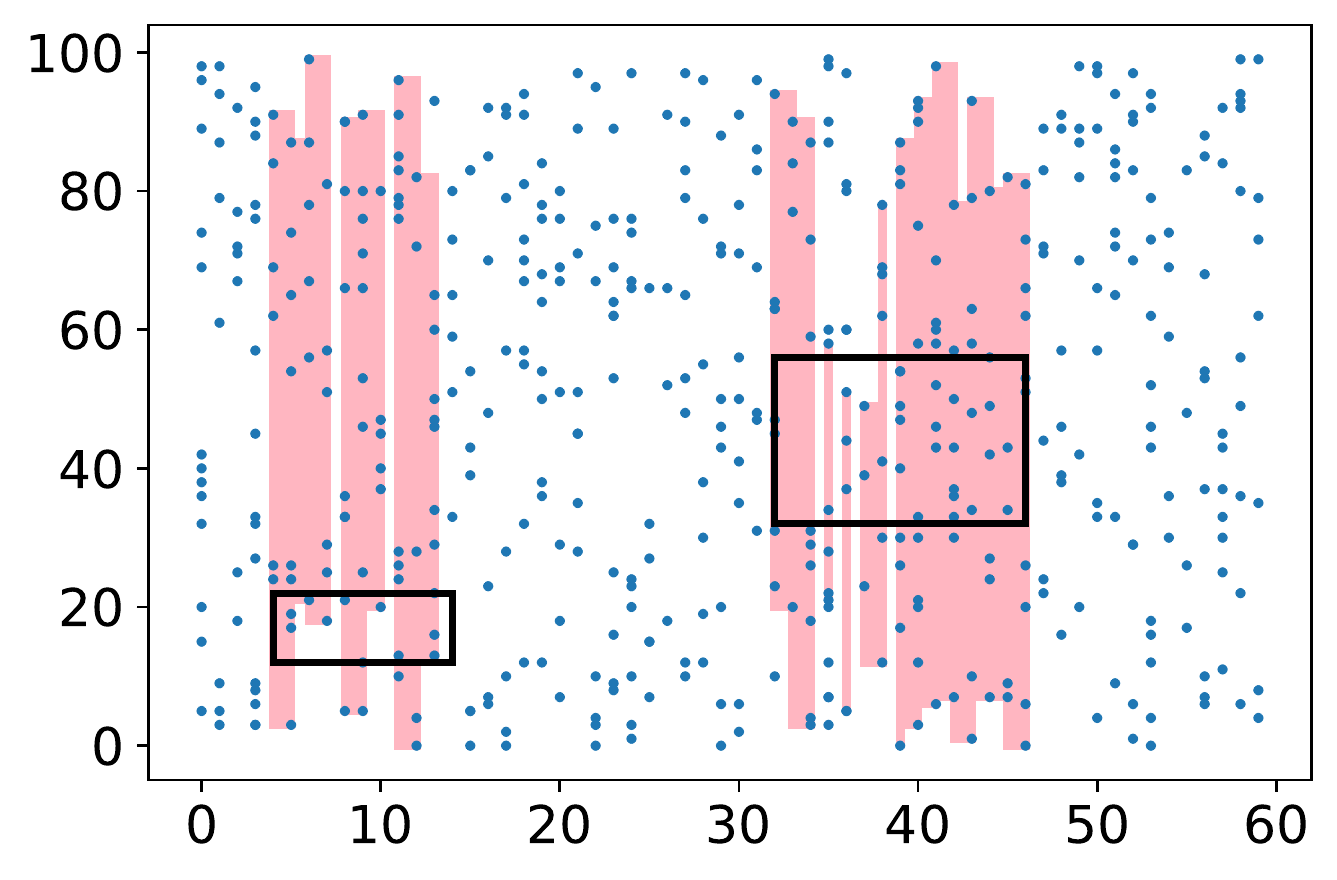}
    \label{fig:cmp-sfc-col}
  } \subfigure[\zorder{} (i.e., \ZM)]{
    \includegraphics[width=.25\linewidth]{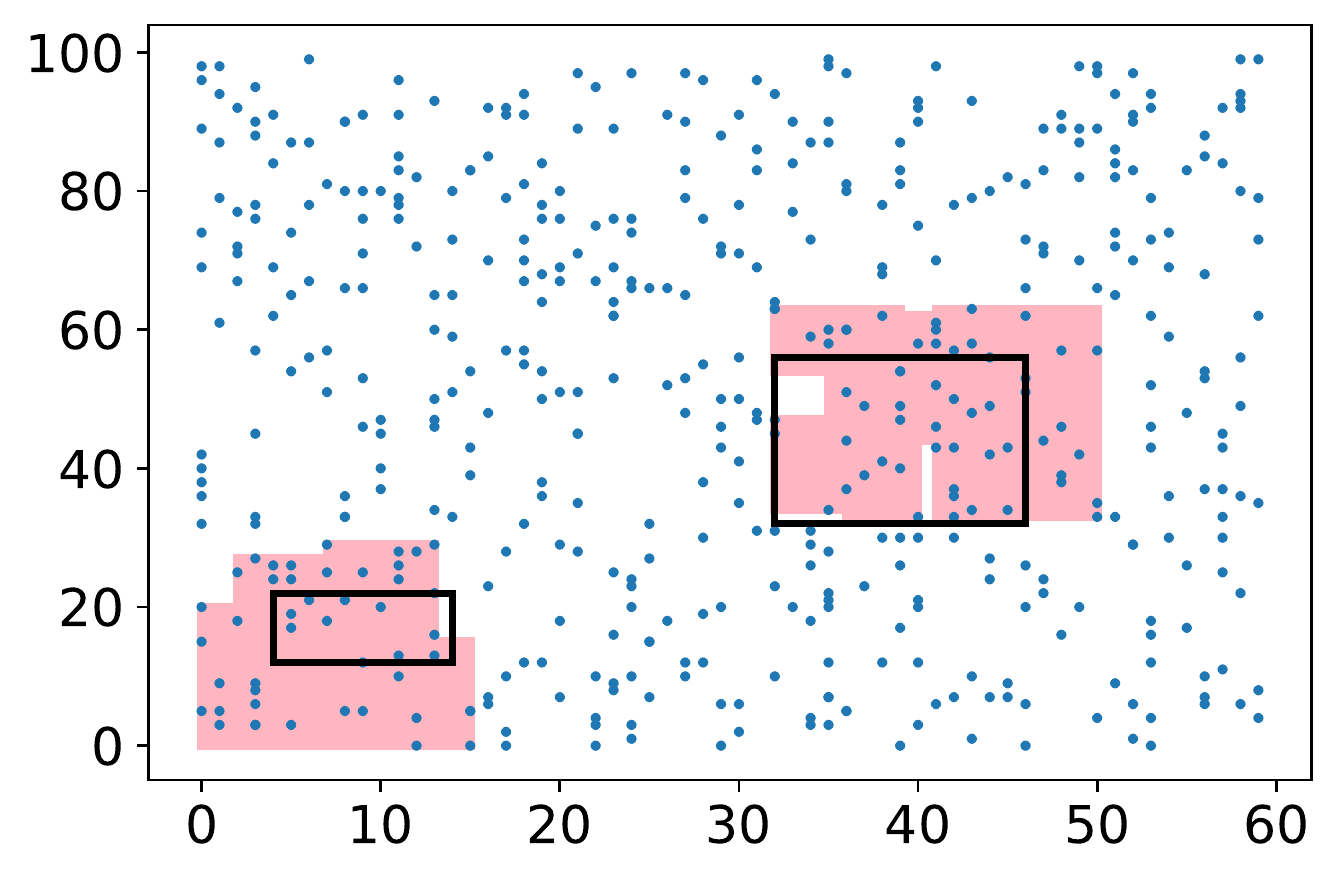}
    \label{fig:cmp-sfc-zm}
  } \subfigure[Learned Grid + Row-major Order (i.e., \Flood)]{
    \includegraphics[width=.25\linewidth]{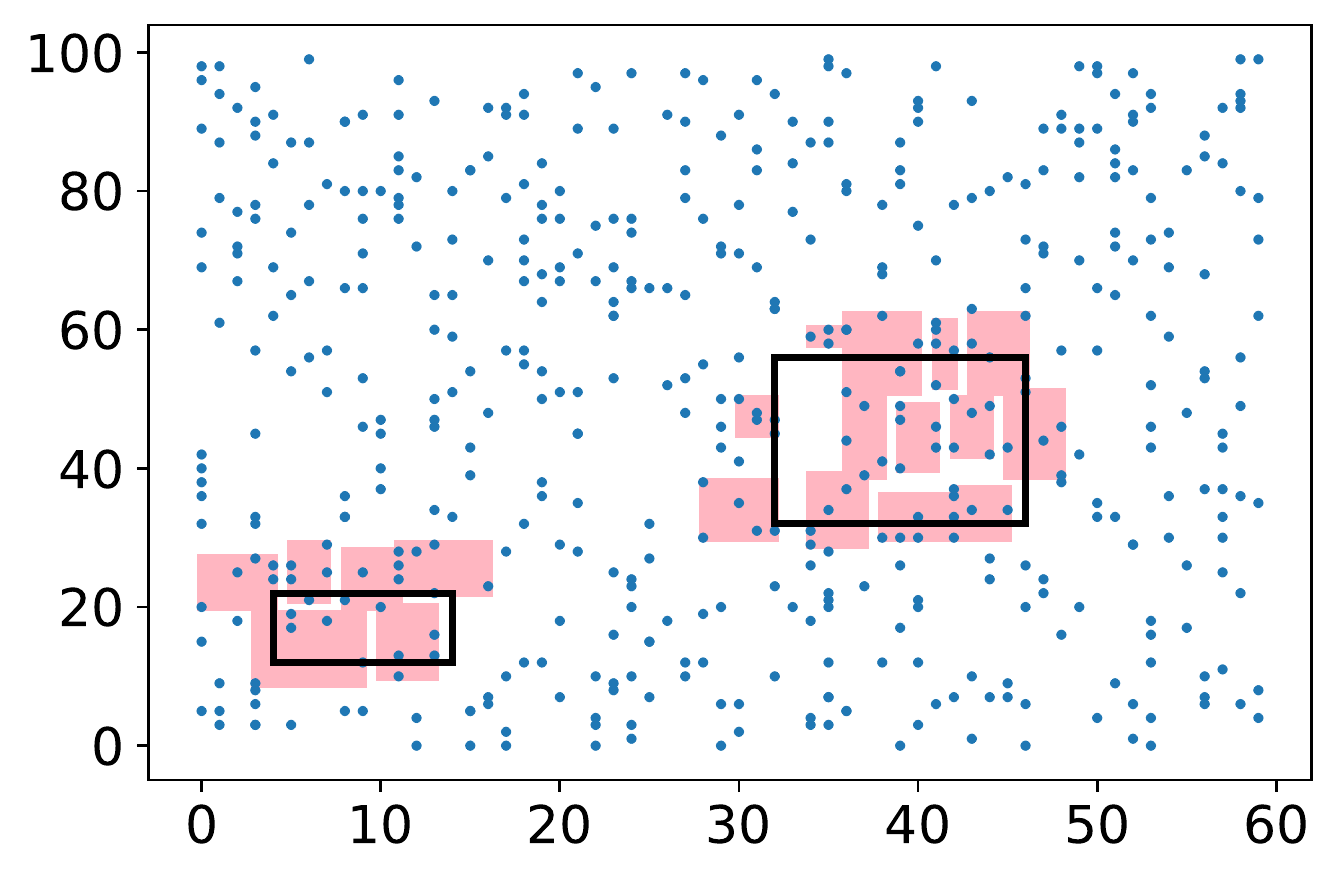}
    \label{fig:cmp-sfc-flood}
  } \subfigure[Learned Monotonic SFC (i.e., \MODELNAME)]{
    \includegraphics[width=.25\linewidth]{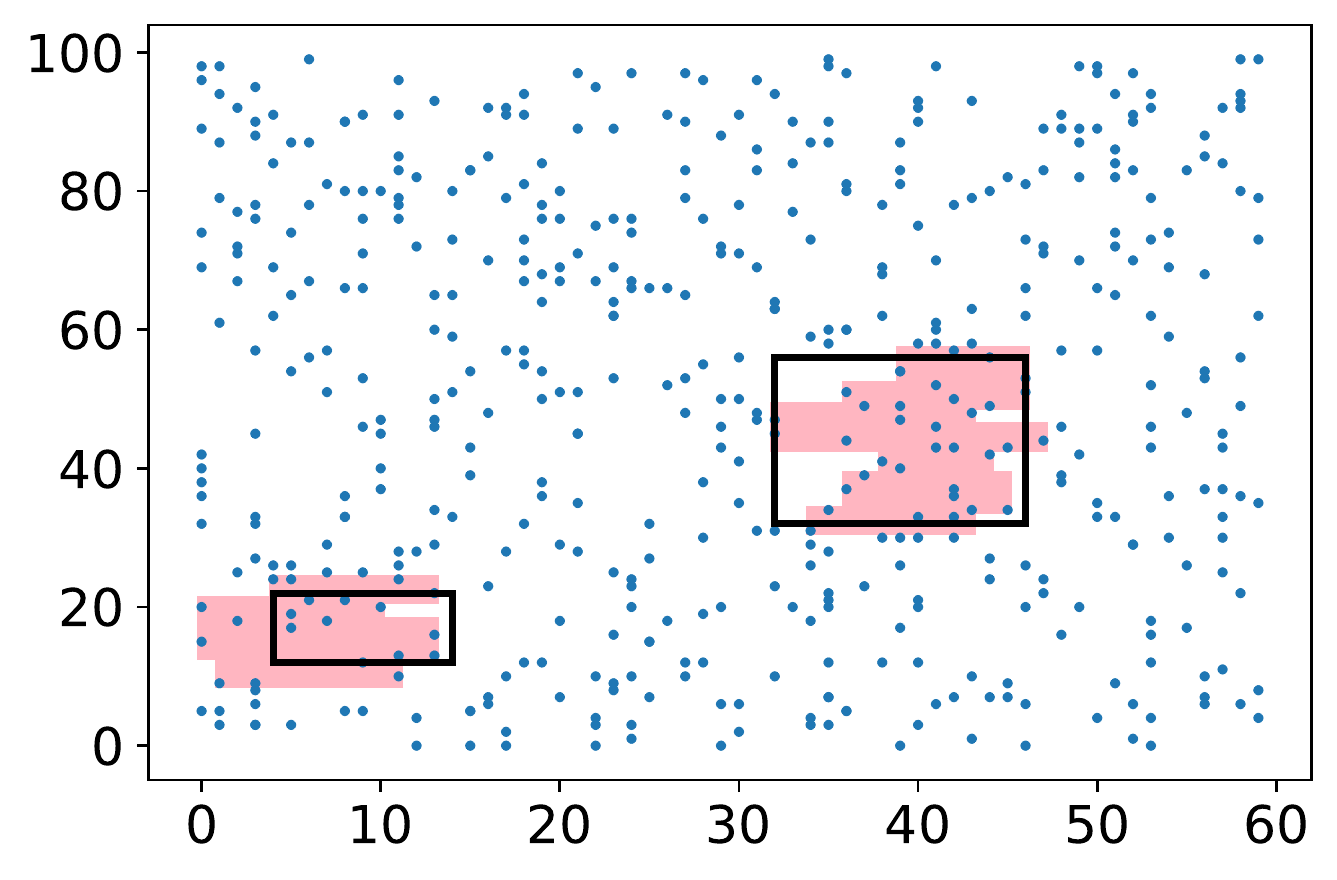}
    \label{fig:cmp-sfc-LMSFC}
  }
  \caption{Visualizing Five SFC-based Indexes (Thick black boxes are queries,
    and data pages accessed during query processing are shaded in red)}
  \label{fig:cmp-sfc-index}
\end{figure*}

As different SFCs induce different linear ordering of the data points in the
dataset, they will result in different query processing costs for a given
workload. This motivates our learning of a good SFC (detailed in the next
Section). Below we provide an example with visualizations to demonstrate this. 

\begin{example}\label{ex:five-sfc} In Figure~\ref{fig:cmp-sfc-index}, we build
  and visualize query processing costs on five indexes based on different
  linearization methods on the same dataset and query workload. We generated a
  set of \emph{random} points in a 2D space, and used the two randomly generated
  queries (denoted as thick black boxes) as the query workload. %
  
  Figures~\ref{fig:cmp-sfc-row} and Figures~\ref{fig:cmp-sfc-col} are the usual
  row-major and column-major order, respectively. Figure~\ref{fig:cmp-sfc-zm}
  uses the \zorder{}, which results in the \ZM{}. Figure~\ref{fig:cmp-sfc-flood}
  is the \Flood{}, which is based on learnable grid partitioning of the space,
  and then orders the grids using a row-major order (with a rearranged dimension
  order). Finally, Figure~\ref{fig:cmp-sfc-LMSFC} shows our proposed method
  \MODELNAME{}, which is based on a learned monotonic SFC.

  We shade the data points accesses during the query processing in
  red.\footnote{For all the methods in the figure, we pack a fixed number of
  data points into each page and the shaded area are based on pages whose MBRs
  overlap with the query window, hence the seemingly irregular shape of the
  shaded areas. %

  } %
  As we can see from Figure~\ref{fig:cmp-sfc-index}, learned indexes, i.e.,
  \Flood{} and our \MODELNAME{}, access fewer data points thanks to the learning
  on the specific instance.
\end{example}

 %% learned SFC family
\section{\MODELNAME Index Construction}
\label{sec:index-construction}

\subsection{Overview}

\begin{figure}[htbp]
  \centering
  \includegraphics[width=0.98\linewidth]{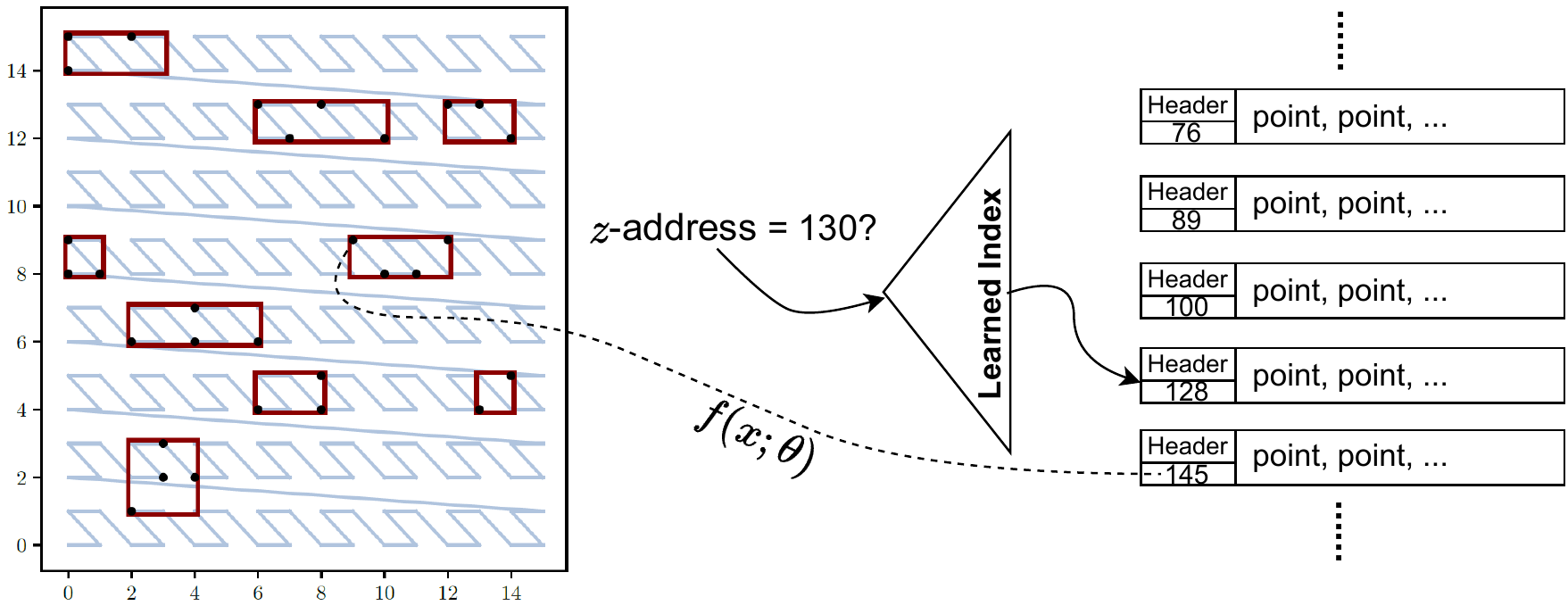}
  \caption{Overview of \MODELNAME{} Index Construction}
  \label{fig:overview}
\end{figure}

Based on the parameterized SFC family defined in the previous section, we
propose a novel \md{} index, \MODELNAME{}, based on \underline{L}earned
\underline{M}onotonic \underline{S}pace \underline{F}illing \underline{C}urves.

We give a high-level sketch of our \MODELNAME{} method in
Figure~\ref{fig:overview}. In our method, we first learn a good monotonic SFC,
$f(x; \theta)$ in Figure~\ref{fig:overview}, which minimizes the query
processing cost of the sampled query workload. The learned SFC gives us a total
order for the \md{} data points. We then propose both an optimal dynamic
programming-based and a sub-optimal yet faster heuristic paging algorithm to
load data points into pages (the thick red rectangles in
Figure~\ref{fig:overview}). Finally, we extract the smallest \zaddress{}es from
each page to form a sorted array, and employ a state-of-the-art learned index
(e.g., pgm~\cite{pgm}) on that, which facilitates the lookup from \zaddress{}es
to pages.

In addition, we also include several optimizations to speed up the query
processing. Specifically, we present a novel query splitting strategy to
minimize the access to spurious pages due to the dimension reduction effect of
the SFC mapping. We also extend the sort dimension optimization~\cite{flood} to
the page-level granularity.

In the rest of this section, we focus on the three steps (i.e., learning an
optimal SFC, Cost-based Paging and Page-level Sort Dimension) in this section
and leave the query processing related techniques to the next Section.

\subsection{Learning an Optimal SFC}
\label{sec:learning}

The goal of learning a parameterized SFC is to find the $\opt{\theta}$ such that
the resulting query processing cost is minimized, i.e., we formulate this as an
optimization problem as:
\begin{align}
  \opt{\theta} = \argmin_{\theta} \E[q \follows \mathcal{Q}]{\mathit{QueryTime} \left(D, q; \theta\right)}
  \label{eq:opt-theta}
\end{align} 
where $\mathcal{Q}$ represents the distribution of the query workload, $q$ is an
\iid{} sample from the distribution, and $QueryTime()$ function returns the
actual query execution time on the given dataset.

To optimize Equation~\eqref{eq:opt-theta}, we can first apply the finite sample
approximation, i.e., by taking sampled queries from $\mathcal{Q}$ and replacing
the expectation with the sample average. Nevertheless, the $QueryTime$ function
is hard to model or approximate accurately as it includes various complex
optimizations. Furthermore, directly evaluating the $QueryTime$ function incurs
exorbitant costs. 

Also, note that $\theta$ is a high-dimensional discrete parameter (with
constraints), the number of choices of $\theta$ (and hence $f(\cdot)$) is
exponential both in the dimensionality $d$ and in $K$, rendering it impossible
to solve the optimization problem via brute force in practice.

\begin{lem}\label{lem:curve-num} The number of different monotonic SFCs for
  $d$-dimensional space is $\Omega{(d!)^K}$.
\end{lem}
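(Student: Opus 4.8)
The plan is to prove the bound by exhibiting an explicitly constructed sub-family of valid monotonic SFCs of size exactly $(d!)^K$; since the cardinality of any sub-family lower-bounds the total count, this immediately yields $\Omega((d!)^K)$. First I would recast what the three constraints say about $\theta$. By constraints (1) and (2), the $Kd$ entries $\coord{\theta}{i}_{j}$ form a bijection between the input bit-slots $(i,j)$ and the output bit-positions $\set{0,\ldots,Kd-1}$, since each entry $2^l$ simply names the output position $l$. Constraint (3) then forces, for each fixed dimension $i$, the positions assigned to $\coord{\theta}{i}_{1},\ldots,\coord{\theta}{i}_{K}$ to appear in strictly increasing order of $j$. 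Hence a valid $\theta$ is equivalent to an ordered partition of the $Kd$ output positions into $d$ labeled blocks $S_1,\ldots,S_d$ of size $K$ (with $S_i$ collecting the positions used by dimension $i$, whose internal order is then forced). This already identifies the exact count as $\frac{(Kd)!}{(K!)^d}$, but for the stated bound a cleaner sub-family suffices and avoids asymptotic estimates.

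Next I would carve out the following sub-family, which generalizes the \zorder{} curve. Partition the $Kd$ output positions into $K$ contiguous \emph{levels}, where level $j$ consists of positions $\set{(j-1)d,\ldots,jd-1}$ for $j=1,\ldots,K$. Within level $j$, I assign the $d$ input bits $\coord{x}{1}_{j},\ldots,\coord{x}{d}_{j}$ (the $j$-th bits across all dimensions) to the $d$ positions of that level according to an arbitrary permutation $\pi_j$ of $\set{1,\ldots,d}$. Choosing the $K$ permutations $\pi_1,\ldots,\pi_K$ independently yields $(d!)^K$ parameter vectors $\theta$, and the original \zorder{} corresponds to taking every $\pi_j$ to be the identity.

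I then have to check two things. For \emph{validity}, each such $\theta$ satisfies constraints (1)--(3): the assignment is by construction a bijection onto $\set{2^0,\ldots,2^{Kd-1}}$, and for any dimension $i$ its $j$-th bit lands in level $j$, whose positions are all strictly below those of level $j'$ for $j<j'$, so $\coord{\theta}{i}_{j} < \coord{\theta}{i}_{j'}$ whenever $j < j'$, which is exactly constraint (3). By the bijectivity and monotonicity argument already established for the parameterized family, each resulting $f(\cdot;\theta)$ is therefore a bijective, monotonic SFC. For \emph{distinctness}, the map $(\pi_1,\ldots,\pi_K) \mapsto \theta$ is injective: since the levels are pairwise disjoint, the restriction of $\theta$ to level $j$ recovers $\pi_j$, so distinct permutation tuples produce distinct parameter vectors and hence distinct SFCs. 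Combining the two, the number of monotonic SFCs is at least $(d!)^K$, i.e.\ $\Omega((d!)^K)$.

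The argument is essentially combinatorial and I expect no analytic difficulty; the one place that needs care — the part I would treat as the crux — is verifying that constraint (3) genuinely survives the construction, i.e.\ that restricting each level to a \emph{contiguous} block of output positions is precisely what keeps every dimension's bits in ascending order, together with confirming that the permutation-to-$\theta$ map is truly injective. An alternative route would be to compute the exact count $\frac{(Kd)!}{(K!)^d}$ and then prove $\frac{(Kd)!}{(K!)^d} \ge (d!)^K$ directly by induction on $K$ or via Stirling estimates; this is also viable but messier, and since it is subsumed by the sub-family inclusion I would prefer the explicit construction.
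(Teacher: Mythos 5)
Your proposal is correct, but its key step differs from the paper's. Both proofs begin the same way: a valid $\theta$ amounts to distributing the $Kd$ output bit-positions among the $d$ dimensions in blocks of $K$ (with each block's internal order forced by constraint (3)), giving the exact count $\binom{Kd}{K}\binom{Kd-K}{K}\cdots\binom{K}{K} = \frac{(Kd)!}{(K!)^d}$. From there the paper proceeds \emph{analytically}: it rewrites $\frac{(Kd)!}{(K!)^d}$ as a product of $d$ groups of $K$ fractions, the $m$-th group being $\frac{(m-1)K+1}{1}\cdots\frac{mK}{K}$, and observes each factor in group $m$ is at least $m$, so the whole product is at least $\prod_{m=1}^{d} m^K = (d!)^K$ --- precisely the "messier" alternative you mention and set aside. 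You instead proceed \emph{constructively}: you exhibit an injective family of $(d!)^K$ valid parameter vectors by partitioning the output positions into $K$ contiguous levels of size $d$ and permuting the dimensions independently within each level, then verify constraints (1)--(3) and injectivity. Your route buys a cleaner, estimate-free argument and makes the richness of the family tangible (it exhibits natural per-level generalizations of the \zorder{} curve); the paper's route buys the sharper piece of information that the count is exactly $\frac{(Kd)!}{(K!)^d}$, of which $(d!)^K$ is only a lower bound. Your verification of the crux --- that contiguous levels preserve constraint (3) and that the level restrictions recover the permutations --- is sound, so the proof stands as is.
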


\begin{proof}
  First, we consider a permutation $\pi = (x_1, x_2, \ldots, x_d)$ of the $d$
  dimensions, where each dimension has $K$ bits. For dimension $x_j$, its $K$
  bits can choose any of the remaining $Kd - (j-1) \cdot K$ bits. Therefore, the
  total number of choices is: 

  \small{
  \begin{align*}  
    \binom{Kd}{K} &\cdot \binom{Kd - K}{K} \cdot \ldots \cdot \binom{K}{K} 
     = \frac{(Kd)!}{(K!)^d} 
     = \left(\frac{1}{1} \cdot \ldots \cdot \frac{K}{K}\right) \\ 
     & \cdot \left(\frac{K + 1}{1} \cdot \ldots \cdot \frac{2K}{K}\right)
     \cdot \ldots 
     \cdot \left(\frac{Kd - K + 1}{1} \cdot \ldots \cdot \frac{Kd}{K}\right) 
     \geq (d!)^K
  \end{align*}}
 
\end{proof}

Due to the discrete nature, gradient descent style algorithms, used in previous
learned index work~\cite{flood}, cannot be applied to solve this optimization
problem either. Furthermore, the finite-sample approximate also introduces a
small yet avoidable noise to the optimization.

In view of the above challenges, we adopt the state-of-the-art Bayesian
optimization algorithm, \SMBO{}~\cite{SMBO}, to approximately solve the above
optimization problem and return a high-quality SFC that has a low average query
time.
\SMBO builds a surrogate model to approximate the relationship between the
parameters and the actual objective function value. 

The learning process follows the standard \SMBO learning framework. We use the
Random Forest model as the surrogate model instead of the typical Gaussian
Process model, which improves the learning speed and does not rely on the \md{}
Gaussian assumption and the choice of the kernel function. 
One of the advantages of using a surrogate model is that it is cheap to evaluate
while also capturing the approximation noise in the objective function. 
\begin{algorithm}[htbp]
  \small
  \KwIn{training dataset $D$, training query set $Q$, max iteration $\vn{maxIters}$}
  \KwOut{an optimized order $\theta_{best}$}
  \State {Initialize $\est{M}$, $\theta_{best}$ and $y_{best}$}
  \For{$i=0$ to $\vn{maxIters}$} {
    %\State {$\Theta_{cands} = \fn{SelectCands}(\Theta_{pool},y_{pool})$}
    \StateCmt{$\Theta_{cands} = \fn{SelectCands}(\est{M}, \theta_{best})$}
    {selects candidates based on the surrogate model}
    \StateCmt{$Y_{cands} = \fn{BatchEval}(\Theta_{cands}, D, Q)$}
    {evaluates the objective function i.e., $QueryTime$}
    \State{Update $\est{M}$ based on $(\Theta_{cands}, Y_{cands})$}
    \If{ $min(Y_{cands}) < y_{best}$} {
      \State {$y_{best} \gets \min(Y_{cands})$}
      \State {$\theta_{best} \gets \argmin(Y_{cands})$}
    }
  }
  \Return{$\theta_{best}$}
  \caption{SMBO learning framework}
  \label{alg:bo}
\end{algorithm}

Algorithm~\ref{alg:bo} illustrates the \SMBO learning framework, where $\est{M}$ is the surrogate model we need to learn. 
Firstly, the surrogate model is built between the initial candidates and their performance based on the $QueryTime$ evaluation results (line 1).  
Secondly, the \SMBO{} algorithm
uses an acquisition function (e.g., the Expected Improvement~\cite{SMBO}) computed on the
surrogate model to suggest other candidates for evaluation in the next iteration
by automatically balancing exploitation and exploration (line 3). Then, 
we evaluate the objective function by building the index on sampled datasets to
save the evaluation time (line 4). By default, we conservatively use the 5\% sampled dataset as the training dataset, which can maintain
both low query time and learning cost. 
More learning process experiments are
presented in Section~\ref{exp:training process}.
The surrogate model gets updated during each iteration (line 5). Finally, we choose the candidate with the least cost in terms of the objective function.

\subsection{Cost-based Paging}
\label{sec:paging}

In order to accommodate external I/Os and allow for extra
optimizations\footnote{e.g., optimizations based on the MBR and/or the sort
dimension of the pages.}, we need to perform \emph{paging}, which partitions the
dataset $D$ into multiple pages. As usual, we assume that each page has a
maximum size of $B$ bytes and must satisfy a min fill factor constraint
specified by $f \in (0, 1]$. \footnote{Technically, we do allow at most one page
to occupy less than $fB$ bytes.} 
That is, the number of bytes used in each page must be within the range of $[fB,
B]$ bytes.

Finding optimal paging for \md{} dataset is
NP-hard~\cite{DBLP:conf/sigmod/YangCWGLMLKA20}, hence existing \md{} indexing
methods usually perform paging based on some heuristics, e.g., \rtree and its
variants used the heuristics to minimize the margin, dead space and overlap area
of the MBRs of the resulting pages~\cite{rstartree, sellis1987r+}. Not
surprisingly, this practice is inherited by \md{} indexes based on SFCs. For
example, RSMI~\cite{RSMI} simply loads the maximum number of points into each
page, which we term as \emph{fixed-sized paging}.

We observe that paging is important and actually can be solved \emph{optimally} 
for SFC-based \md{} indexes. 
This is because we can record the MBRs of the data points within each page and
use the MBR to further optimize the query processing. On one hand, a page can be
skipped if a page's MBR is disjoint with a query; on the other hand, if a page's
MBR is contained in a query, we can process the data points on the page 
sequentially without other filtering overhead. In both cases, such optimizations
are more likely if the MBR of a page is small. Default one-dimensional paging
methods, such as the fixed-size paging method, are not aware of the MBR of the
pages and cannot perform active optimizations for it.

Based on the above observations, we design a scoring function $S(P)$ that is
intuitively the density of a page $P$, or $S(P) = \frac{vol(P)}{size(P)}$, where
$vol(P)$ and $size(P)$ gives the volume of MBR of the page $P$ and the number of
data points in the page $P$, respectively. 

We then formulate the \emph{optimal cost-based paging} problem as finding a
paging solution, i.e., a partitioning $P \definedas \set{P_1, \ldots, P_{k(P)}}$
over $D$ (where $k(P)$ denotes the number of resulting pages), such that the
total score of the solution $P$ is minimized, i.e., 

{\small
\begin{align*}
  \opt{P} = \argmin_{P}  \sum_{{j \in \set{1, \ldots, k(P)}}} S(P_j), \, \text{subject to } size(P_j) \in [fB, B]
\end{align*}
}
%\normalsize

In the following, we first give an algorithm to solve the above problem
optimally based on Dynamic Programming (DP), and then give a sub-optimal but
fast heuristic paging algorithm. Both methods achieve a better paging layout
than the fixed-sized paging and hence improve the query performance.

\subsubsection{Dynamic Programming Paging Method}
\label{sec:dynam-progr-paging}

Thanks to the SFC which provides a linear order for the data points, we are able
to circumvent the NP-hardness of the multi-dimensional paging problem by solving
the one-dimensional paging problem optimally via dynamic programming. %

Let $OPT[i]$ be the optimal cost obtained by an optimal cost-based paging
algorithm for the first $i$ data points. Then we can derive the following
recurrent equations:

{\small   
\begin{align*}
    &OPT[i] = S(Page(D[1 \twoldots i])) && , i < \frac{fB}{4d}\\
    &OPT[i] = \min_{\mathclap{s \in [\frac{fB}{4d}, \frac{B}{4d}]}} \left( OPT[i - s] + S(Page(D[i-s+1\twoldots i]) \right) && , \text{otherwise}
\end{align*}
}
%\normalsize

\noindent, where $Page(z)$ denotes the page formed by a set of points, denoted as $z$ and we
assume each integer takes 4 bytes. %
Obviously, $OPT[n]$ gives the cost of the optimal paging for the entire dataset,
and it is easy to use backtracking to report the optimal paging solution
$\opt{P}$. The time complexity of the dynamic programming paging method is
$O(\frac{nB}{4d})$ as the scoring function $S$ can be computed in $O(1)$ time via
incremental computation. The above algorithm is implemented in Algorithm~\ref{alg:dp_paging_method}.

\begin{algorithm}[htbp]
  \small 
  \KwIn{$D$ is a $d$-dimensional data array
  ordered by $Z$-address, a fill factor constraint $f$, page size limit $B$, 
  and score function $S$} 
  \KwOut{Return a page array $P$} 
  \State{$i \gets 1$}
  \State{initialize $OPT$ array} 
  \While{$i \le |D|$}{
    \If{$i < \frac{fB}{4d}$} {
      \State{$\displaystyle OPT[i] \gets S(Page(D[1 \twoldots i]))$} 
    } 
    \Else {
      \State{$\displaystyle OPT[i] \gets \min_{{s \in [\frac{fB}{4d}, \frac{B}{4d}]}} \left( OPT[i - s] + S(Page(D[i-s+1 \twoldots i])) \right)$}
    }
    \State{$i \gets i + 1$}
  }
  \State{$P \gets$ using backtracking on $OPT$} 
  \Return{$P$} 
  \caption{$DynamicProgrammingPagingMethod$} 
  \label{alg:dp_paging_method} 
\end{algorithm}

\subsubsection{Heuristic Paging Method}

Although the DP algorithm is linear in $n$, it is still time-consuming in
practice as $B$ is typically a large constant (e.g., $B = 8192$ in our
experiment). %
There, we further propose a heuristic paging method, which can achieve
comparable query performance and faster construction time compared with the DP
method. %

The heuristic algorithm is a greedy packing algorithm, which packs as many data
points into the current page as possible until some condition is violated. 
We give the pseudo-code of heuristic paging method
 in Algorithm~\ref{alg:heuristic_paging_method}. 
The condition stipulates that the new MBR, formed by adding the current data point
into the page (line 6), should \emph{not} enlarge the old MBR by more than $\alpha$ times
($\alpha > 1$ is a hyper-parameter) (line 7). This condition reduces the chance that the MBR of the resulting page becomes too large (with respect to the number of data points within), where a large MBR may cause much dead space and increase the chance of intersecting with the queries. 

 \begin{algorithm}[htbp]
  \small
  \KwIn{$D$ is a $d$-dimensional data array
  ordered by $Z$-address, a fill factor constraint $f$, page size limit $B$, and 
     $\alpha$ is a float, used to control page's MBR}
  \KwOut{$P$} 
  \State{Initialize $P$ array}        
  \State{Initialize a empty page $p$}
  \While{Unvisited Points in $D$}{
    \If{$p$ is empty} {
      \State{Load $\frac{fB}{4d}$ data points into $p$ and update $p$'s mbr}
    }
      \State{$mbr' \gets$ the MBR of $p$ containing the next data point}
      \If{$size(p) < \frac{B}{4d}$ \textbf{\textup{and}} $Volume(mbr') < \alpha \cdot
      Volume(p.mbr)$} { 
        \State{add the next point in $p$ and update $p.mbr$} 
      }\Else {
        \State{$P.add(p)$}         
        \State{Initialize a empty page $p$}
        %\State{break}
      }
  
    } 
    \If{$\size{p} > 0$}
    { \State{$P.add(p)$} } 
    \Return{$P$} 
    \caption{$HeuristicPagingMethod$}
  \label{alg:heuristic_paging_method}
\end{algorithm}

\subsection{Page-level Sort Dimension}

Following \Flood~\cite{flood}, we maintain the points in each page sorted in a
chosen dimension named \emph{sort dimension}. Unlike \Flood where the sort
dimension is fixed for \emph{all} pages, we allow using different sort
dimensions in different pages, which provides more skipping opportunities to
filter as many irrelevant points as possible when processing intersecting pages.
This is because different sort dimensions may result in various sizes of the
search area after refinement. Thus, we can choose the sort dimension that can
achieve the least search cost for each page. In a similar vein, we utilize the
query workload information to choose the sort dimension for each page as
follows: for each page, we collect the set of intersecting queries in the query workload.
We estimate the query cost using each of the $d$ dimensions as the sorting
dimension, and choose the one with the least query cost. If there are no
intersecting queries for a page, we use a default order, which is determined in
the same way as \Flood{}.

Once we have determined the sort dimension for each page, we order the data
points in the page by increasing order of the sort dimension. When we need to
scan a page, we can first refine the search range (or physical storage range)
according to the corresponding sort dimension. Specifically, given a window
query $q$, the range constraint over sort dimension $d^{*}$ is
$\coord{q_L}{d^{*}} \leq \coord{x}{d^{*}} \leq \coord{q_U}{d^{*}}$. The points
in each page are stored contiguously in increasing order of the corresponding
sort dimension. Thus, we can use binary search or a one-dimensional index model
to 
  accelerate the search via finding the lower-bound position of
  $\coord{q_L}{d^{*}}$ and the upper-bound position of $\coord{q_U}{d^{*}}$ in
  the physical storage range. As a result, points that do not satisfy the range
  constraint on the sort dimension are filtered out, which reduces the scanning
  overhead.  

Besides, sort dimension can also reduce the computation cost during
verification. Once we determine the search range, we can guarantee the points in
this range are satisfied the query's constraint over the sort dimension.
Therefore, there is no need to verify the value on the sort dimension, resulting
in saved computation overhead.

 %% index construction
\section{Query Processing}
\label{sec:query-processing}

As alluded to in Section~\ref{sec:fram-learn-sfcs}, to answer a query $q$ using
an SFC-based index, it takes two steps:

\begin{enumerate}[(1)]
\item Projection: given a spatial query rectangle $q$, we determine its scan
  range in the \zaddress{}es as $[f(q_L), f(q_U)]$ according to
  Theorem~\ref{thm:z-range}. 

\item Scan with filtering: All pages whose \zaddress{}es fall within the range
  need to be scanned. We can locate these pages using the forward index.
  Additionally, as we maintain the MBR and page-specific sort dimension information for each
  page, we can perform folklore optimizations such as skipping irrelevant pages,
  and scanning only the relevant portion of the page.

\end{enumerate}

The main limitation of the above framework is that it ignores the potentially
large numbers of false positive data points within the \zaddress{} range due to
the SFC mapping.

\begin{figure}[tb]
  \centering
  \subfigure[Learned Z-order w/o query splitting]{
    \includegraphics[width=.45\linewidth]{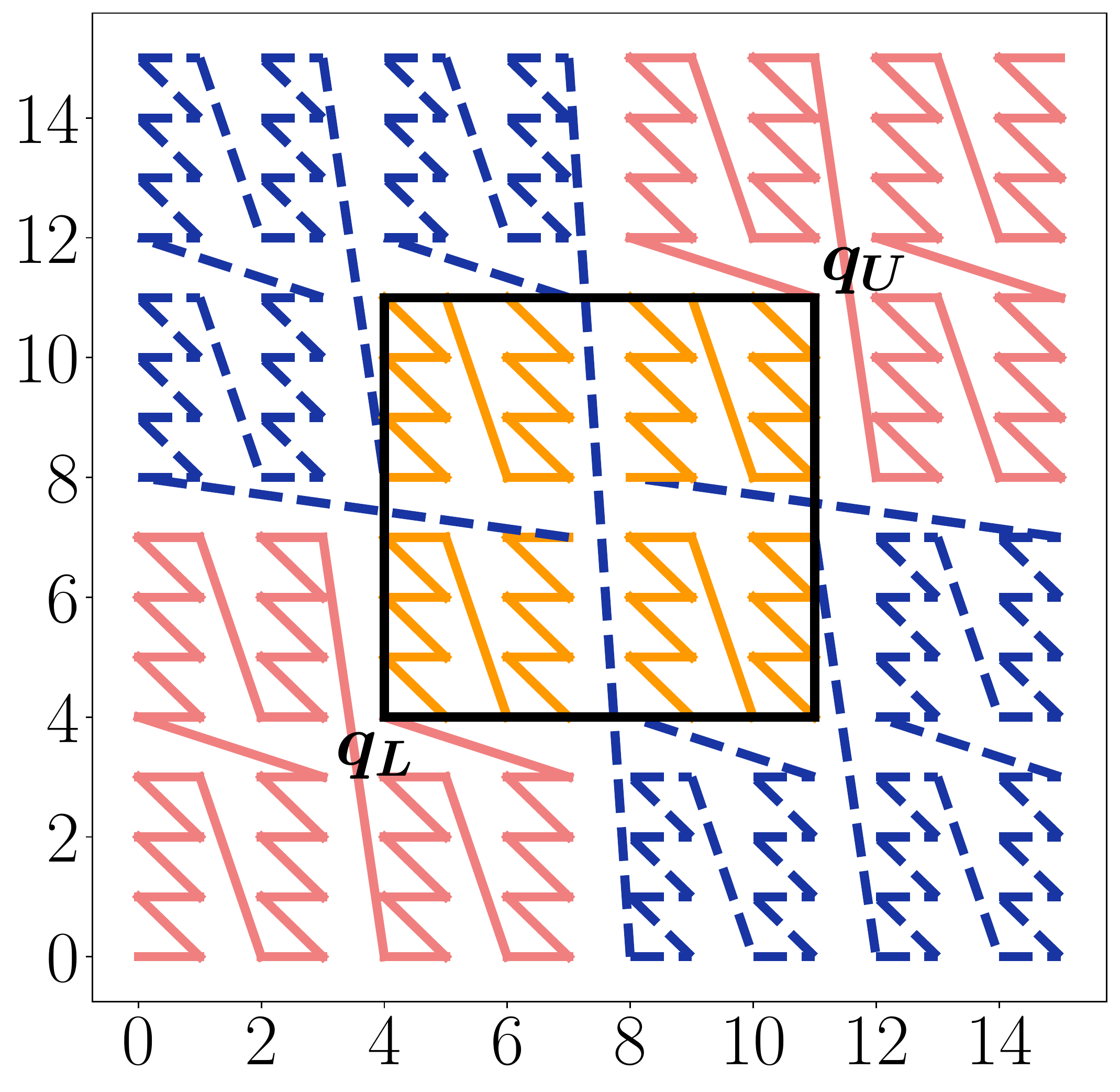}
  }
  \subfigure[Learned Z-order with query splitting]{
    \includegraphics[width=.45\linewidth]{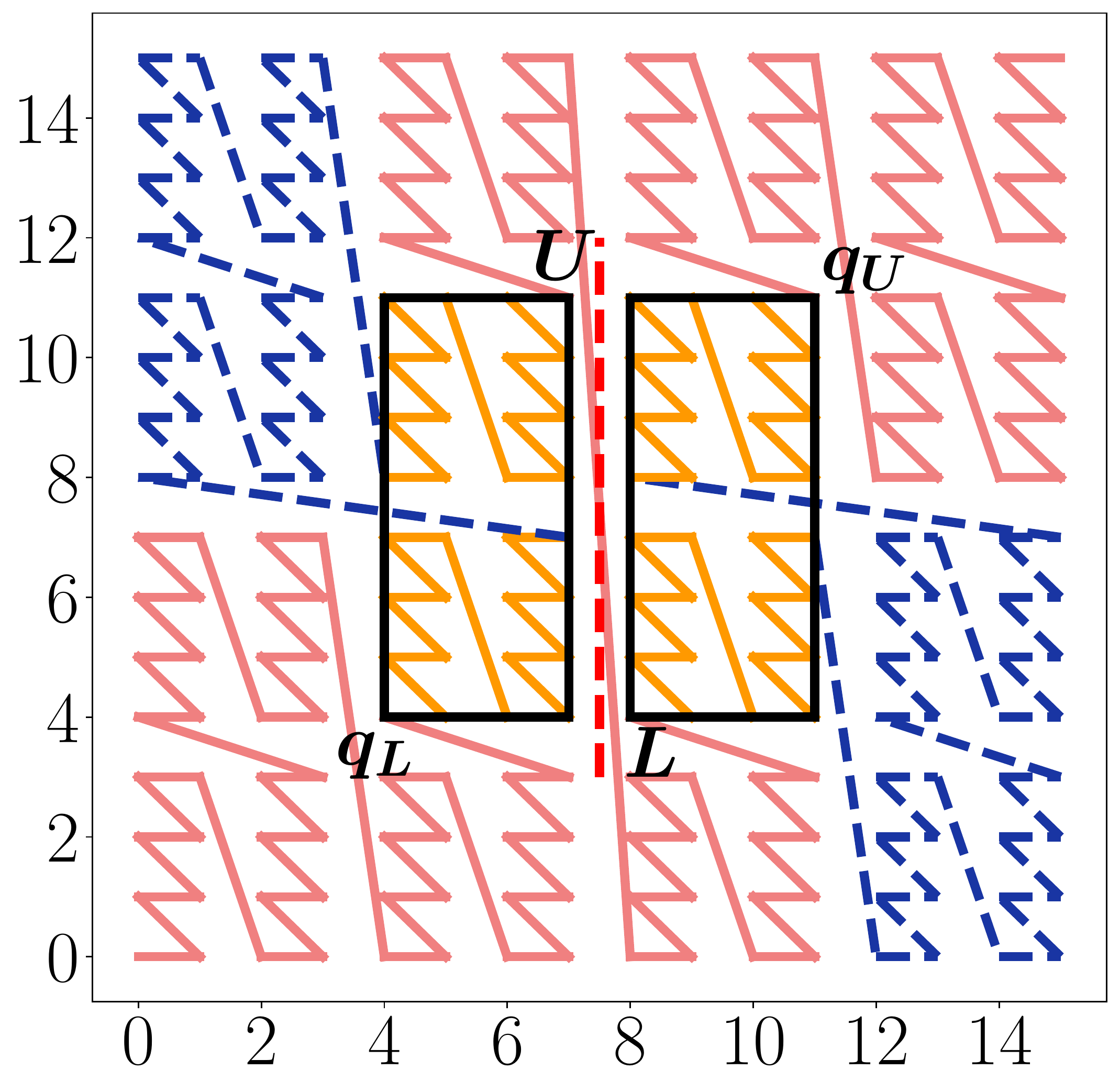}
  }
  \caption{Example of query partition (black rectangle is a query window, yellow 
    part only contains relevant points within the search range, blue part
    only contains irrelevant points within the search range)}
  \label{fig:query_partition}
\end{figure}

\begin{example}
  Consider the query (the black rectangle) in
  Figure~\ref{fig:query_partition}(a). The corresponding \zaddress{} range can
  be decomposed into the blue part\textbf{s} (false positives) and the
  yellow part\textbf{s}. Scanning the whole range, even with filtering, incur much
  unnecessary overhead in accessing and filtering pages that contain only false
  positive points. 
\end{example}

This phenomenon was also observed in a few methods such as
UB-Tree~\cite{DBLP:conf/vldb/RamsakMFZEB00}, and a lazy skipping strategy was
used. In this strategy, instead of scanning all the pages in the \zaddress{}
range, it invokes a skipping function, $\fn{FindNextZaddress}$, after scanning
the current page, to compute the next page that contains the \emph{first} true
positive point with respect to $q$. While this strategy is guaranteed to skip
\emph{all} the false positive \emph{pages}, it has the following drawbacks: 
\begin{inparaenum}[(I)]
\item It incurs significant overhead as this function has to be invoked for
  every true positive page. $\fn{FindNextZaddress}$, technically, will compute
  the next \zaddress{} (which may be virtual and does not correspond to any
  data point) after that of the last point in the current page, and translate it
  to a page. 
  This will require accessing the forward index. As we employ a learned index,
  this incurs a non-trivial overhead for model estimation and local search. Even if
  we employ a \bplustree{}, it may require accessing internal or leaf pages of the
  index. 
\item The skipping function may only skip very few pages; in fact, in many
  cases, it will just return the next page.

\end{inparaenum}

Instead, we propose a novel proactive skipping strategy based on query
splitting, which is especially efficient on monotonic SFCs such as ours. We
illustrate its idea in the following example.

\begin{example}
  Consider the same query (the black rectangle) in
  Figure~\ref{fig:query_partition}(a). We can split the query into two parts by
  cutting it at the value $8$ on the $x$-axis, as illustrated in
  Figure~\ref{fig:query_partition}(b). We still plot the false positive parts in 
  blue. It reduces the number of false positive parts compared with the
  case without query splitting --- e.g., the part $[4, 7] \times [12, 15]$ is
  eliminated.
\end{example}

\subsection{Recursive Query Splitting}
\label{section:hqpm}

We start by introducing a procedure to compute the best way to split a query
into exactly two sub-queries (i.e., optimal 1-split), and then we generalize it
to obtain multiple sub-queries based on recursive splitting.

\paragraph{Optimal 1-Split Algorithm}

Consider a query $q$ that corresponds to a \zaddress{} range $[f(q_L), f(q_U)]$.
Without loss of generality, assume that we split at the value $v$ on the $\delta$-th
dimension. This will split the query into two sub-queries, with the
corresponding \zaddress{} ranges as $[f(q_L), f(U)]$ and $[f(L), f(q_U)]$,
respectively (See Figure~\ref{fig:query_partition}(b)). We define the \emph{cost} of the split as $f(U) - f(q_L) + f(q_U) - f(L)$, or
intuitively, the sum of the \zaddress{} ranges of the two resulting sub-queries.
This cost function is chosen as it is highly correlated with the actual query
processing cost after the split and can be easily computed \emph{without}
accessing the data points. %

Then we formulate the \emph{optimal 1-split problem} as finding the split (i.e.,
the dimension and the value) such that the cost of such split is the minimum.

Or formally, 
\begin{align*}
  (\opt{\delta}, \opt{v})= \argmin_{\delta \in [1, d], v \in [\coord{q_L}{\delta}, \coord{q_U}{\delta}]} f(U) - f(q_L) + f(q_U) - f(L)
\end{align*}
Note that both $U$ and $L$ are determined by $\delta$ and $v$, but we omit the
notational dependency for the easy of exposition. %

As $f(q_L)$ and $f(q_U)$ are constants for the fixed query $q$, the above
minimization is equivalent to the following maximization problem, i.e., finding
the maximum ``gap'' between $f(U)$ and $f(L)$:
\begin{align*}
  \argmax_{\delta \in [1, d], v \in [\coord{q_L}{\delta}, \coord{q_U}{\delta}]} f(L) - f(U)
\end{align*}
Plugging in the definition of $f(\cdot)$, it becomes: 
\begin{align}
  \argmax_{\delta \in [1, d], v \in [\coord{q_L}{\delta}, \coord{q_U}{\delta}]} % 
  \sum_{i = 1}^{d}\sum_{j = 1}^{K} \theta_j^{(i)} \cdot (L_j^{(i)} - U_j^{(i)})
\end{align}

For a fixed $\delta \in [1, d]$, we can find the optimal split value $\opt{v}$ as:
\begin{align}
  \opt{v} & = \argmax_{v \in [\coord{q_L}{\delta}, \coord{q_U}{\delta}]} \sum_{j = 1}^{K}
            \theta_j^{(\delta)} \cdot (L_j^{(\delta)} - U_j^{(\delta)})
            + C \notag\\ 
          & = \argmax_{v \in [\coord{q_L}{\delta}, \coord{q_U}{\delta}]} \sum_{j = 1}^{K}
            \theta_j^{(\delta)} \cdot (L_j^{(\delta)} - U_j^{(\delta)})    \label{eq:one-d-cut}
\end{align}
where $C = \left(\sum_{i \neq \delta}^{d}\sum_{j = 1}^{K} \theta_j^{(i)}
\cdot (L_j^{(i)} - U_j^{(i)}) \right)$ is a constant.

\begin{lem}\label{lem:optimal-one-cut}
  A solution to the optimization problem of Equation~\ref{eq:one-d-cut} is
  $(q_U^{(\delta)} \mathtt{>>} l) \mathtt{<<} l$, where $l$ is the most
  significant bit of $\coord{q_L}{\delta} \oplus \coord{q_U}{\delta}$, where
  $\oplus$ denotes $\mathrm{XOR}$. 
\end{lem}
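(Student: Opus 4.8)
The plan is to collapse the apparently two-parameter objective into a one-dimensional question about the gap of a weight map between two consecutive integers, and then to exploit the power-of-two structure of the $\coord{\theta}{\delta}_j$.

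First I would fix the dimension $\delta$ and isolate the only quantities that depend on the split value $v$. With every coordinate but the $\delta$-th frozen, the lower sub-query occupies $[\coord{q_L}{\delta}, v-1]$ and the upper one occupies $[v, \coord{q_U}{\delta}]$ on dimension $\delta$, so that $\coord{U}{\delta} = v-1$ and $\coord{L}{\delta} = v$ are consecutive integers, both parts being non-empty exactly when $v \in [\coord{q_L}{\delta}+1, \coord{q_U}{\delta}]$. Introducing the single-dimension weight map $\tilde f(y) = \sum_{j=1}^{K} \coord{\theta}{\delta}_j \, y_j$, the third constraint on $\theta$ guarantees that the $\coord{\theta}{\delta}_j$ are distinct powers of two increasing in $j$, so $\tilde f$ is strictly order preserving on $[0, 2^K-1]$. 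The objective of Equation~\ref{eq:one-d-cut} then reads $\tilde f(v) - \tilde f(v-1)$, i.e. the jump of $\tilde f$ across the increment $v-1 \to v$.

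Next I would evaluate this jump from the carry pattern of incrementing an integer. Passing from $v-1$ to $v$ clears the trailing ones of $v-1$ and sets the next bit, so if $s$ denotes the position of the lowest set bit of $v$ then $\tilde f(v) - \tilde f(v-1) = \coord{\theta}{\delta}_s - \sum_{j < s} \coord{\theta}{\delta}_j$, a value depending on $v$ only through $s$; call it $g(s)$. The decisive monotonicity step is that $g$ is non-decreasing in $s$: since consecutive increasing powers of two satisfy $\coord{\theta}{\delta}_{s+1} \ge 2\,\coord{\theta}{\delta}_s$, we get $g(s+1) - g(s) = \coord{\theta}{\delta}_{s+1} - 2\,\coord{\theta}{\delta}_s \ge 0$, while $g(s) > 0$ throughout. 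Hence maximizing the objective is the same as pushing the lowest set bit of $v$ as high as the admissible range allows.

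Finally I would identify that largest attainable position. Because $\coord{q_L}{\delta}$ and $\coord{q_U}{\delta}$ agree on every bit above $l$ and differ at bit $l$ (a $0$ in $q_L$, a $1$ in $q_U$), the largest multiple of $2^{l+1}$ not exceeding $\coord{q_U}{\delta}$ is at most $\coord{q_L}{\delta}$ and the next one already exceeds $\coord{q_U}{\delta}$; therefore no admissible $v$ is divisible by $2^{l+1}$, i.e. no admissible $v$ has lowest set bit above $l$. On the other hand $\opt{v} = (\coord{q_U}{\delta} \mathtt{>>} l) \mathtt{<<} l$ keeps bit $l$, zeros the bits below it, and satisfies $\coord{q_L}{\delta} < \opt{v} \le \coord{q_U}{\delta}$, so its lowest set bit is exactly $l$. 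Combined with the monotonicity of $g$, this shows $\opt{v}$ attains the maximum, proving the claim. The main obstacle I anticipate is purely bookkeeping: fixing the off-by-one convention that ties $v$ to the consecutive corners $\coord{U}{\delta}, \coord{L}{\delta}$ and restricting to non-empty splits (so the spurious choice $v = \coord{q_L}{\delta}$, which could otherwise expose a higher multiple of two, is excluded), together with the careful range argument ruling out a multiple of $2^{l+1}$; the jump computation and the power-of-two monotonicity are then routine.
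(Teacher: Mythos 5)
Your proof is correct and takes essentially the same route the paper sketches: it rests on exactly the two facts the paper cites as the basis of the lemma, namely that $\coord{U}{\delta}+1=v=\coord{L}{\delta}$ are consecutive integers and that $\coord{\theta}{\delta}_{j+1}\ge 2\coord{\theta}{\delta}_{j}$, from which the objective depends only on the lowest set bit of $v$, is monotone in that bit position, and bit $l$ is the highest position attainable inside the window. Your explicit exclusion of the degenerate endpoint $v=\coord{q_L}{\delta}$ (which yields an empty lower sub-query yet could spuriously expose a higher power of two, e.g.\ $\coord{q_L}{\delta}=8$, $\coord{q_U}{\delta}=11$) is a careful touch the paper's statement glosses over, but it does not change the substance of the argument.
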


Therefore, we can solve the optimal 1-split problem by finding the optimal cut
value for each of the $d$ dimensions, hence, the complexity is only $O(d)$. 

We note that Lemma~\ref{lem:optimal-one-cut} holds because of the fact that
$\coord{U}{\delta} + 1 = v = \coord{L}{\delta}$ and the fact that
$\theta_{j+1} \geq 2 \theta_{j}$ (derived easily from the constraints introduced
to guarantee the monotonic property). If the monotonic property does
\textbf{not} hold, then one may need to check every possible $v$ value to
perform the optimization, hence taking $O(\| q_U - q_L \|_{1})$ time complexity,
which means the resulting procedure may be more expensive for ``large'' queries.

\begin{example}
  Consider the example in Figure~\ref{fig:query_partition}(a) again. The query 
  $q = [4, 11] \times [4, 11]$, and the learned SFC corresponds to the parameter
  \begin{align*}
    \theta = [[2^0, 2^3, 2^5, 2^7], [2^1, 2^2, 2^4, 2^6]]
  \end{align*}
  Hence, $q$'s \zaddress{} range is $[f(q_L), f(q_U)] = [48, 207]$. 
  Our optimal 1-split algorithm will first consider the $x$-axis. In this case,
  the most significant bit $l = 3$ as $\mathtt{(0100)_2 \oplus (1011)_2 = (1111)_2}$. Then
  the $\opt{v}$ on the axis is $\mathtt{(1011)_2 >> 3 << 3 = (1000)_2} = 8$, and then
  the cost of splitting at 8 on the $x$-axis can be calculated. 

\end{example}

\paragraph{Recursive Splitting}

  As one split is often insufficient to reduce the number of irrelevant pages,
  we adopt our optimal 1-split algorithm recursively to divide the query window
  into multiple parts. In our implementation, the stopping condition is set as
  either reaching a recursion depth of $k_{\textrm{maxsplit}}$ or when there is
  no gap to split. $k_{\textrm{maxsplit}}$ is a parameter that can
  balance the number of index accesses with the skipping opportunity of disjoint
  pages. A higher $k_{\textrm{maxsplit}}$ can effectively filter out disjoint
  pages but causes more index access overhead. Conversely, a lower
  $k_{\textrm{maxsplit}}$ saves the cost on index access but may not eliminate
  enough irrelevant pages. We provide the pseudo-code of Recursive Query Splitting 
  in Algorithm~\ref{alg:RQS}. In each recursion, we find the optimal cut for each $d$ dimension and then we choose the split with the maximum ``gap'' (line 5).  Based on the split value $\opt{v}$ and corresponding dimension $\delta$, a spatial rectangle can be split into two and passed into the next iteration (lines 6-8).

\begin{algorithm}[htbp]
  \small
  \KwIn{$q$ is a spatial rectangle, $k$ is an integer, used to represent the number of remaining partition times}
  \KwOut{Return an array of search range $qs$}
  \State{Initialize $qs$ array}
  \If{$k = 0\,\lor$ there is no gap to split $q$} {
    \State{$qs.add(q)$}
  }
  \Else{
    \State{Find the optimal cut $\opt{v}$ and corresponding dimension $\delta$ based on $Lemma\,2$}
    \State{$q1,\,q2 \gets$ split $q$ based on $\opt{v}$ and $\delta$}
    \State{$r1 \gets Recursive Query Splitting(q1, k - 1)$}
    \State{$r2 \gets Recursive Query Splitting(q2, k - 1)$}
    \State{For each spatial rectangle from $r1$ and $r2$, we add it into $qs$}
  }
  \Return{$qs$}
  \caption{$Recursive Query Splitting$}
  \label{alg:RQS}
\end{algorithm}
 %% query processing
\section{Experiments}
\label{sec:exp}

\subsection{Experimental Settings}

\myparagraph{Datasets}%
We use three real-world datasets with different characteristics in our
experiments (See Table~\ref{tab:dataset}) and they are also used in the previous
work. We preprocess the datasets to scale up all coordinates to integers and
remove duplicates.
\textbf{OSM} is a spatial dataset consisting of 250M records randomly sampled
from North America in the OpenStreetMap
dataset\footnote{\url{https://download.geofabrik.de/}}. %
  We use the GPS coordinates (i.e., longitude and latitude) to form a 2D
  dataset.
\textbf{NYC} is randomly sampled from records of yellow taxi trips in New York
  City in 2018 and
  2019\footnote{\url{https://www1.nyc.gov/site/tlc/about/tlc-trip-record-data.page}}.
  We used the pick-up locations, trip distances, and total amounts to form a 3D
  dataset.
\textbf{STOCK} consists of daily historical stock prices from 1970 to
  2018\footnote{\url{https://www.kaggle.com/ehallmar/daily-historical-stock-prices-1970-2018}}.
  We select four features: the high price, the low price, the adjusted close
  price, and trading volume.

\begin{table}[htpb]%
  \small
  \caption{Dataset Characteristics}
  \label{tab:dataset}
 
  \begin{center}
  \begin{tabular}{l|r|r|r}
    \toprule
          & $n$ (\#-of-Points) & $d$ (\#-of-Dimensions) & Size (GB) \\\midrule
    OSM   & 250M              & 2                     & 1.95      \\
    NYC   & 30M               & 3                     & 0.35      \\
    STOCK & 30M               & 4                     & 0.47      \\\bottomrule
  \end{tabular}
\end{center}
\end{table}

\myparagraph{Query Workload}%
As the datasets do not come with their query workloads, we generate the default
query workloads as follows.

A query is parameterized by its center and its range in every dimension. We
generate query centers in one of the two modes:
\begin{inparaenum}[(i)]
\item \emph{Skewed}, in which query centers are randomly sampled data points.
\item \emph{Uniform}, in which query centers are randomly sampled within the
  data space. 
\end{inparaenum}
The width of the queries in each dimension is uniformly sampled from zero to the
width of the data space of that dimension scaled by 0.05. 
All query windows are clipped to be within the data space.

Following~\cite{learnedcard}, the final query workload is obtained by mixing
90\% of the skewed queries with 10\% uniform queries. The resulting
selectivities for the three datasets are about 0.7\%, 0.07\%, 0.01\%,
respectively. For each dataset, we generate a training and a test query workload
of sizes 1000 independently. All queries use the \texttt{COUNT} aggregate
function, i.e., reporting the number of data points within the query window.

\myparagraph{Algorithms}%
We compare our proposed \MODELNAME with the following algorithms:
\begin{enumerate}[$\bullet$]
\item \textbf{\ZM}~\cite{ZM} combines the fixed \zorder curve and learned index,
  together with the fixed-size paging.

\item \textbf{\rstartree}~\cite{rstartree} is a traditional and widely-used
  \md{} index. We use the implement \rstartree{} in the Boost C++ Libraries
  (\url{https://www.boost.org}).

\item \textbf{\Flood}~\cite{flood} is another state-of-the-art learned index for
  \md{} data, which learns an optimal configuration of \md{} grids for a given
  data and query workload. %
  It can be approximately viewed as a variable-size SFC that follows row-major
  order for an appropriate permutation of the $d$ dimensions. %
  We adapt the \Flood{} method with fixed-size paging for a fair comparison.

\end{enumerate}

We note that these algorithms represent prior state-of-the-art indexes in
different categories. For example, \Flood{} has outperformed other \md{}
indexes, such as Grid File~\cite{gridfile}, kd-tree~\cite{kdtree},
UB-tree~\cite{DBLP:conf/vldb/RamsakMFZEB00} and Hyperoctree~\cite{hyperoctree},
in their experimental evaluation. %
We do not consider other learned \md{} indexes such as \LISA{}\cite{li2020lisa}
, \RSMI{}\cite{RSMI}, as the original codes have various
limitations or did not achieve competitive performance in our experiments.

\cite{IOZ} is another related work, which learns a quadtree and applies a
\zorder variant on each node to adapt query workload. One difference is that
\MODELNAME learns the mapping between data points and addresses, so we can
directly locate the search range rather than traversing a tree structure. In
addition, we use a BO algorithm to learn better ordering. Another difference is
that we directly use the actual query time as the metric rather than the number
of false positive points used in \cite{IOZ}. Since the performance of \cite{IOZ}
is even worse than the baseline model (i.e., \zm), we do not include it in our
experiment.

We experimented with \TSUNAMI{}\cite{Tsunami} but did not report its performance
here, because its splitting algorithm does not result in any split in any
dimension on our query workloads\footnote{The skewness of our
workload is based on data distribution while Tsunami is not. }, in
which case, \TSUNAMI{}'s performance degrades to that of \Flood{}.

We use \texttt{C++} to implement all the methods. To compare these methods
fairly, we run all the experiments either in the in-memory mode (for those that
do not support external I/O) or in the warm buffer mode (for those that support
external I/O). %
The page size is set to $B = 8192$ bytes, and the min fill factor $f = 0.25$.
For the one-dimensional space mapped from all SFCs (\zorder{} or learned SFCs),
we use 64 bits, and $K = \floor{\frac{64}{d}}$. 
 And we empirically choose $k_{\textrm{maxsplit}}$ = 4 in
recursive query splitting.

For learned indexes, we use the PGM~\cite{pgm} as a one-dimensional learned
index since PGM achieves the competitive range query performance in the static
dataset and is easily embedded in the different learned \md 
indexes. The error bound in PGM is empirically set to 128, which is robust for
different configurations and datasets.

All the experiments are performed on a machine with i9-7900X CPU @ 3.30GHz  
and 64 GB main memory running Ubuntu 20.04.4.

\subsection{Query Performance}

\begin{figure}[htpb] 
\centering 
\includegraphics[width=0.85\linewidth]{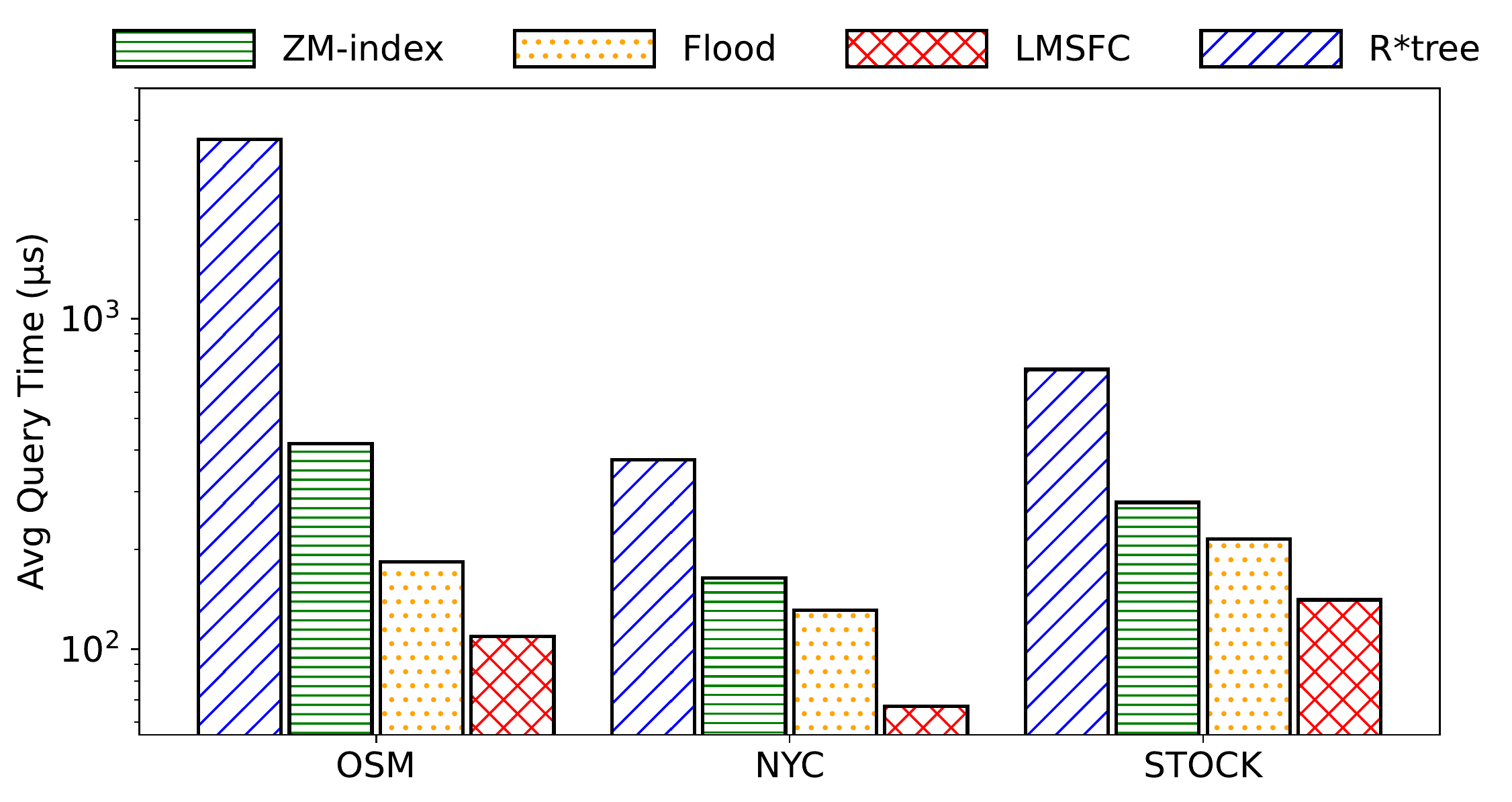} 

  \caption{Query Performance} 
  \label{fig:queryoerformance} 
\end{figure}

 In this section, we compare \MODELNAME with a traditional \md index and other
learned \md indexes for \md range queries. 

Figure~\ref{fig:queryoerformance} reports the query time for different indexes
on each dataset. \MODELNAME outperforms all other indexes across all the
datasets. \MODELNAME achieves between 1.52$\times$ and 1.96$\times$ speedup on
query time compared with the runner-up. 
  The main advantage of LMSFC over the baseline (i.e., the Z-order curve) is
  that the learned SFC preserves multi-dimensional locality better after the
  mapping into the one-dimension address space. Consequently, close-by points in
  the multi-dimensional space are more likely assigned into the same page. This
  leads to pages with smaller/compact MBRs, hence fewer page accesses when
  answering range queries. 
We note that \MODELNAME is much faster than the \zm, achieving 3.8$\times$
speedup on the OSM dataset. This demonstrates the huge potential of a learned
SFC versus a fixed SFC as this is the key difference between the two indexes. 
All learned \md indexes are significantly superior to the traditional \md index
\rstartree{}. This confirms that there is a need to incorporate ML-based methods
into database components to improve the performance.

  In addition, we further investigate false positive (FP) records scanned by
  each index. In OSM dataset, the number of scanned FP points by \rstartree,
  \zm, \Flood and \MODELNAME are 60940, 72291, 25947, and 19067 separately per
  query. \zm scans more FP points than \rstartree since \rstartree utilizes a
  heuristic method to achieve good clustering during packing.  By using a query
  workload as prior knowledge, \Flood and \MODELNAME can adapt their structures
  to access fewer FP points. Besides, \MODELNAME applies page optimizations to
  further reduce FP points. Thus, \MODELNAME achieves the smallest number of
  false positive points being scanned. In the other two higher dimensional
  datasets, \MODELNAME shows superiority in this metric, which are 17.4$\times$
  and 11.1$\times$ less than \rstartree, 3.8$\times$ and 5.0$\times$ less than
  \Flood, and 10.1$\times$ and 6.4$\times$ less than \zm. This is because data
  points are sparse in the higher dimensional data space, resulting in poor
  clustering during paging. Thus, paging optimization needs to be considered in
  \md indexes.

\subsection{Selectivity}

\label{sec:selectivity}

\begin{figure}[htpb]
  \begin{minipage}[t]{0.46\linewidth}
    \centering
    \includegraphics[width=\linewidth]{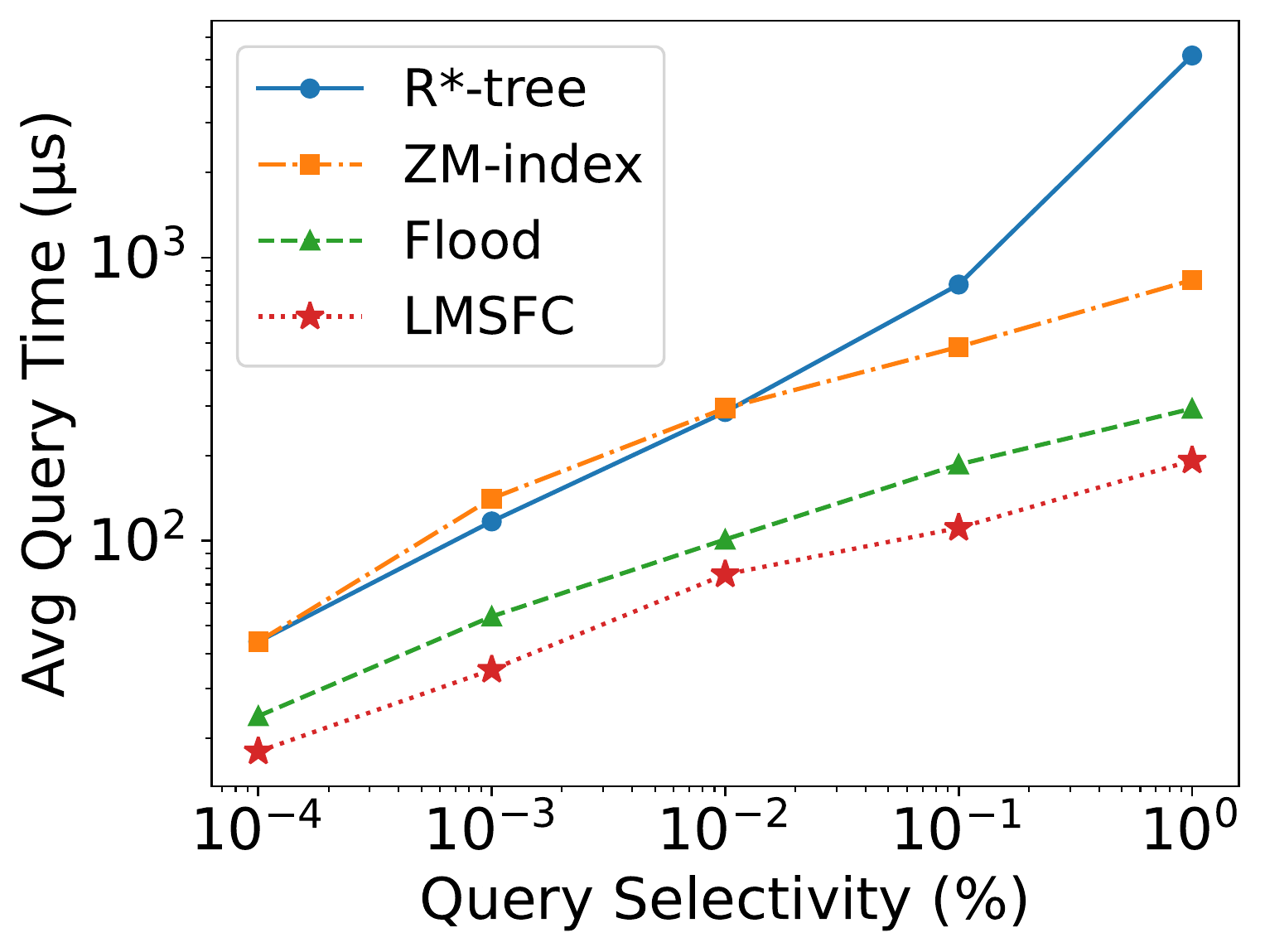}
    \caption{Varying Query Selectivity}
    \label{fig:selectivity}
  \end{minipage}
  \hspace{0.05\linewidth}
  \begin{minipage}[t]{0.46\linewidth}
    \centering
    \includegraphics[width=\linewidth]{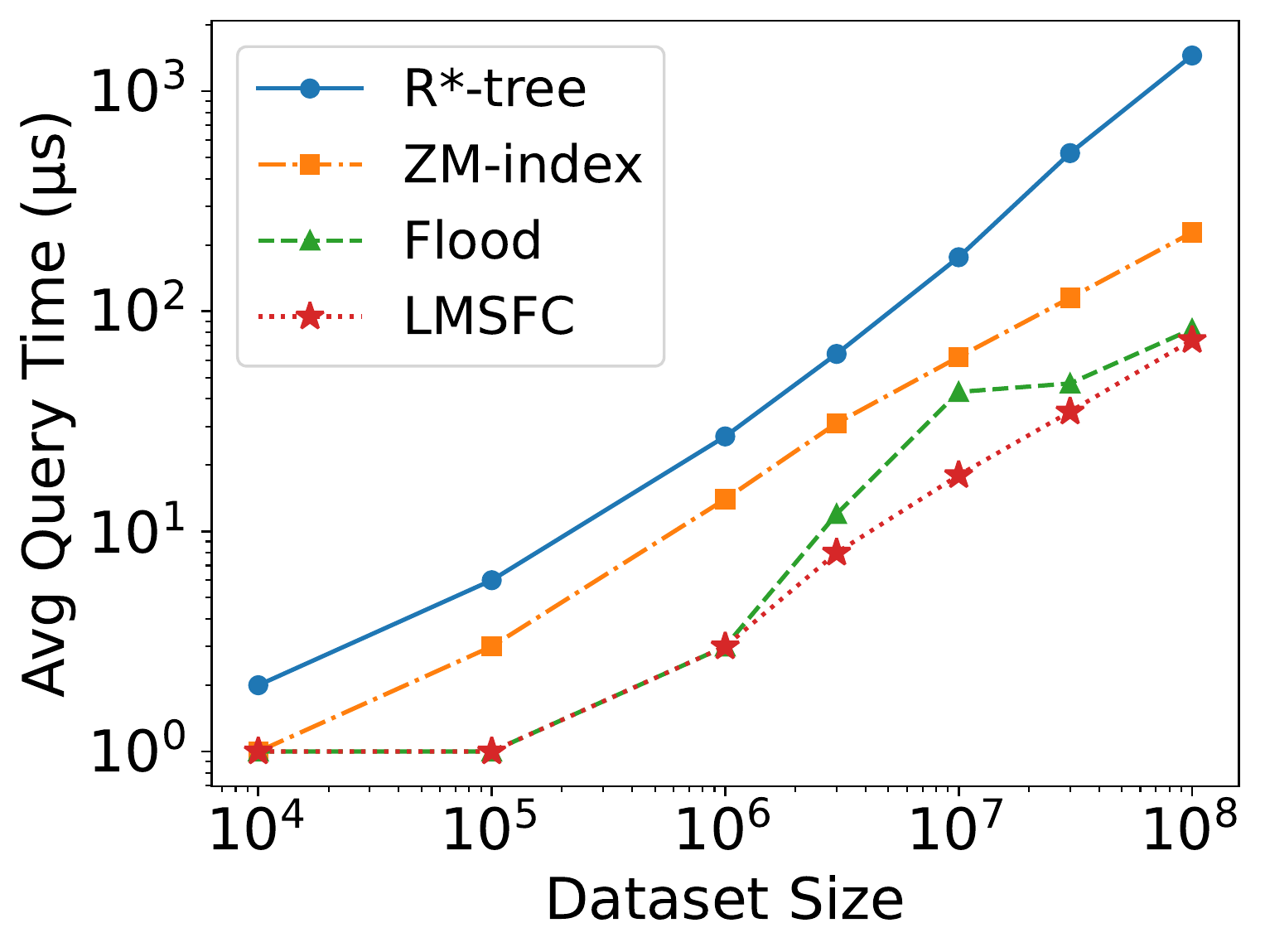}
    \caption{Varying Dataset Size}
    \label{fig:datainc}
  \end{minipage}%
 
\end{figure}

Next, we study the performances of all the methods with respect to query
selectivity. We vary the selectivity from 0.0001\% to 1\% by uniformly scaling
the query windows accordingly. 

Figure~\ref{fig:selectivity} shows the result on the OSM dataset (and similar
results are present in other datasets), where both axes are in logarithmic
scales. %
For all the methods, the query times grow approximately linearly with the query
selectivity since more data points are accessed. We notice a deterioration with
the \rstartree when the query selectivity becomes large; this may be due to the
fact that when the query window grows, it is more likely to intersect with more
pages, and hence more page access and backtracking. % 

  Conceptually, learned indexes only need to scan pages within a certain
  \zaddress{} range, and there is no complex intersecting MBR test or
  back-tracking. However, \rtree and its variants need to perform more complex
  MBR intersection queries per inner node and may result in many back-tracking
  (esp., for higher dimensional cases).

\ZM{} achieves similar performance with \rstartree{}, but performs consistently
across the selectivity range. %
\Flood{} has a significant improvement consistently over \ZM{}, while \MODELNAME
further improves the query performance consistently, demonstrating the wide
applicability of learned indexes.

\subsection{Dataset Scalability}
\label{sec:datascale}
To investigate the scalability, we sub-sample the OSM dataset to create datasets
of the same distribution but with varying sizes.  
Figure~\ref{fig:datainc} shows the result, where both axes are in logarithmic
scales. %
We can see that all the indexes scale approximately linearly with the data size,
whereas \MODELNAME{} performs the best, followed by \Flood{}, \ZM{}, and finally
\rstartree{}. %
We also investigated the reason why \Flood{} behaves noticeably worse than
expected for 10M data points. It is partly due to the sub-optimal configuration
it learned from the sampled dataset. If we allow \Flood{} to learn from, e.g.,
30\% of the dataset, the resulting performance matches the approximate linear
trend much more closely.

\begin{figure*}[htpb]
    \centering
    \subfigure[OSM]{
      \includegraphics[width=.3\linewidth]{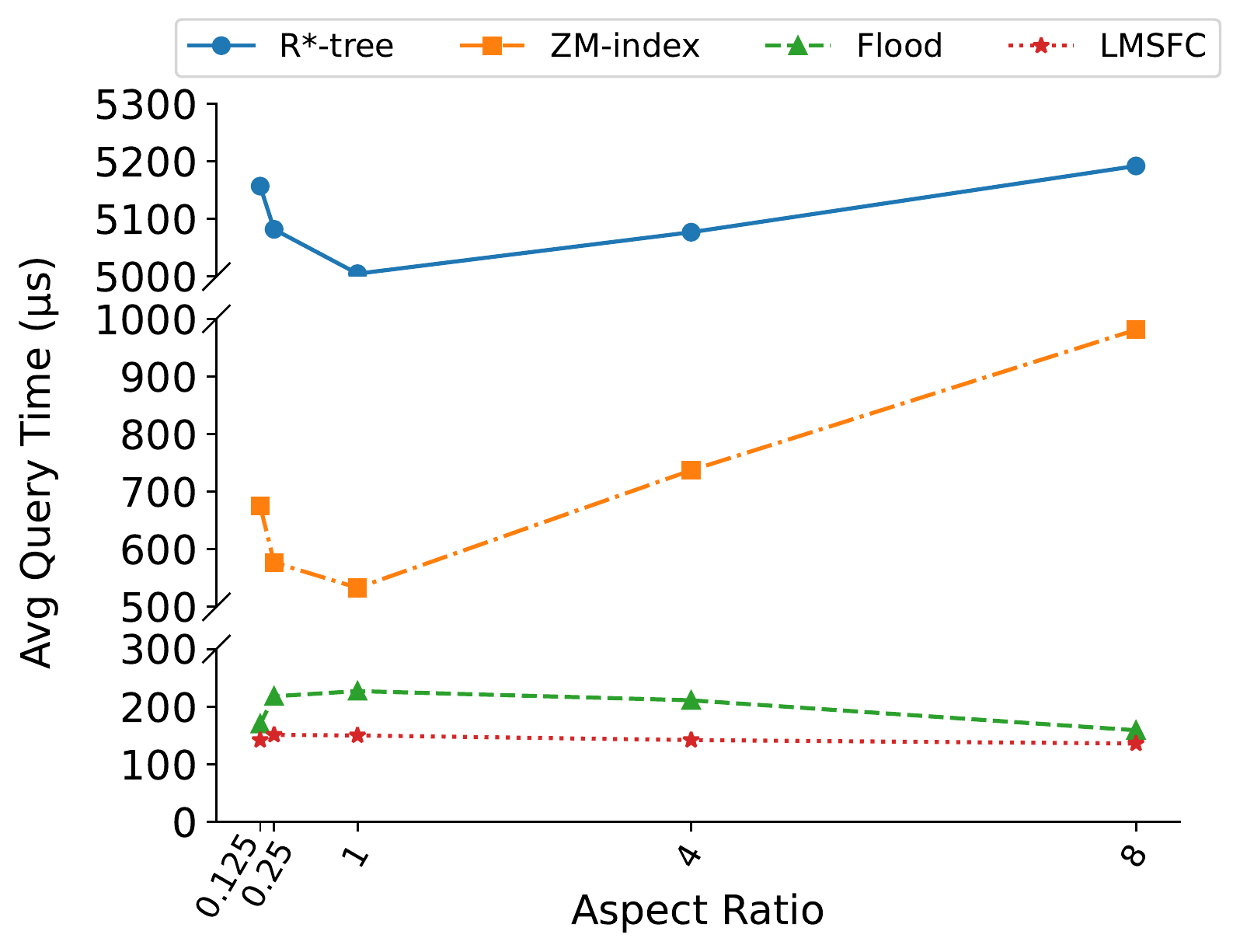}

    } \subfigure[NYC]{
      \includegraphics[width=.3\linewidth]{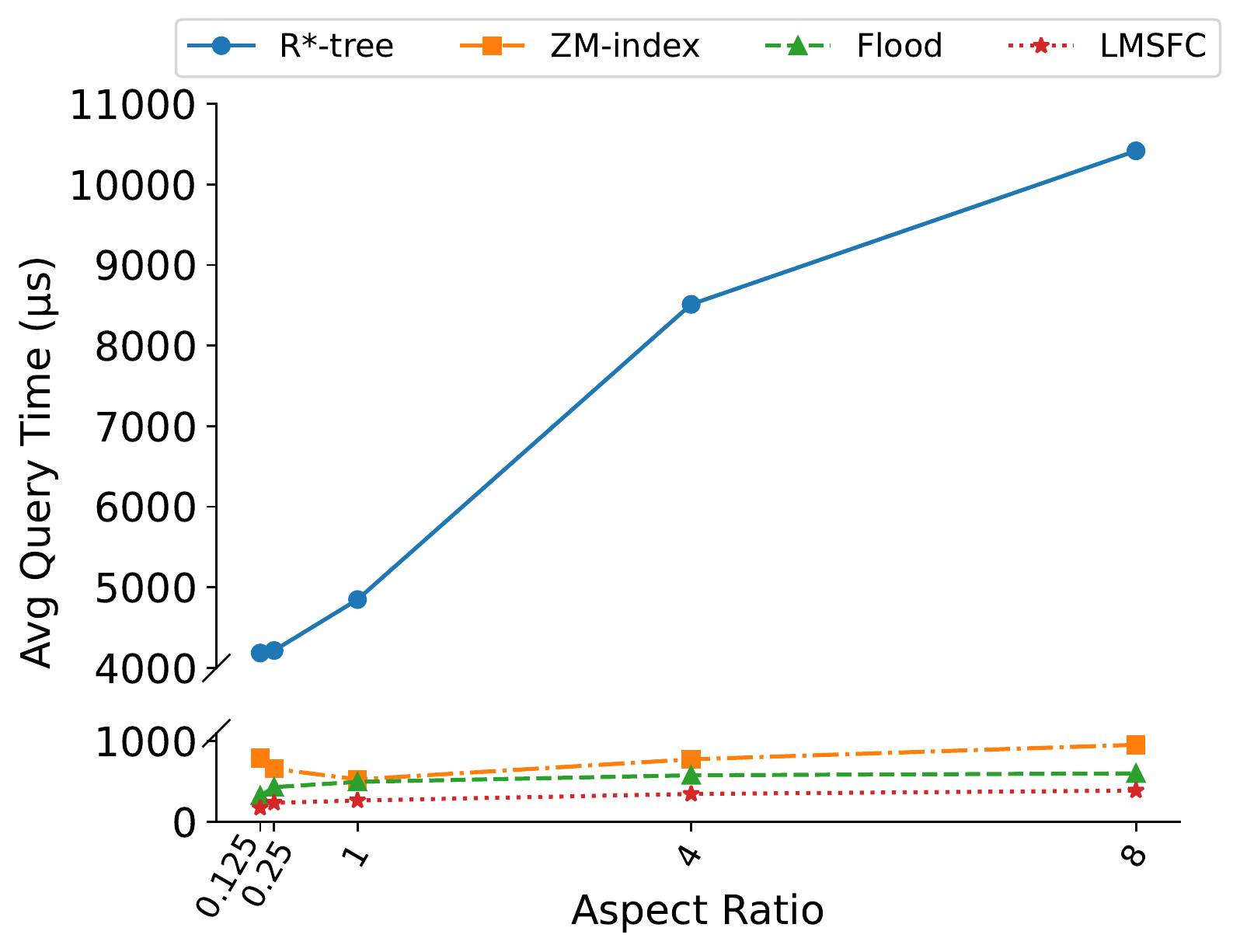}
    } \subfigure[STOCK]{
      \includegraphics[width=.3\linewidth]{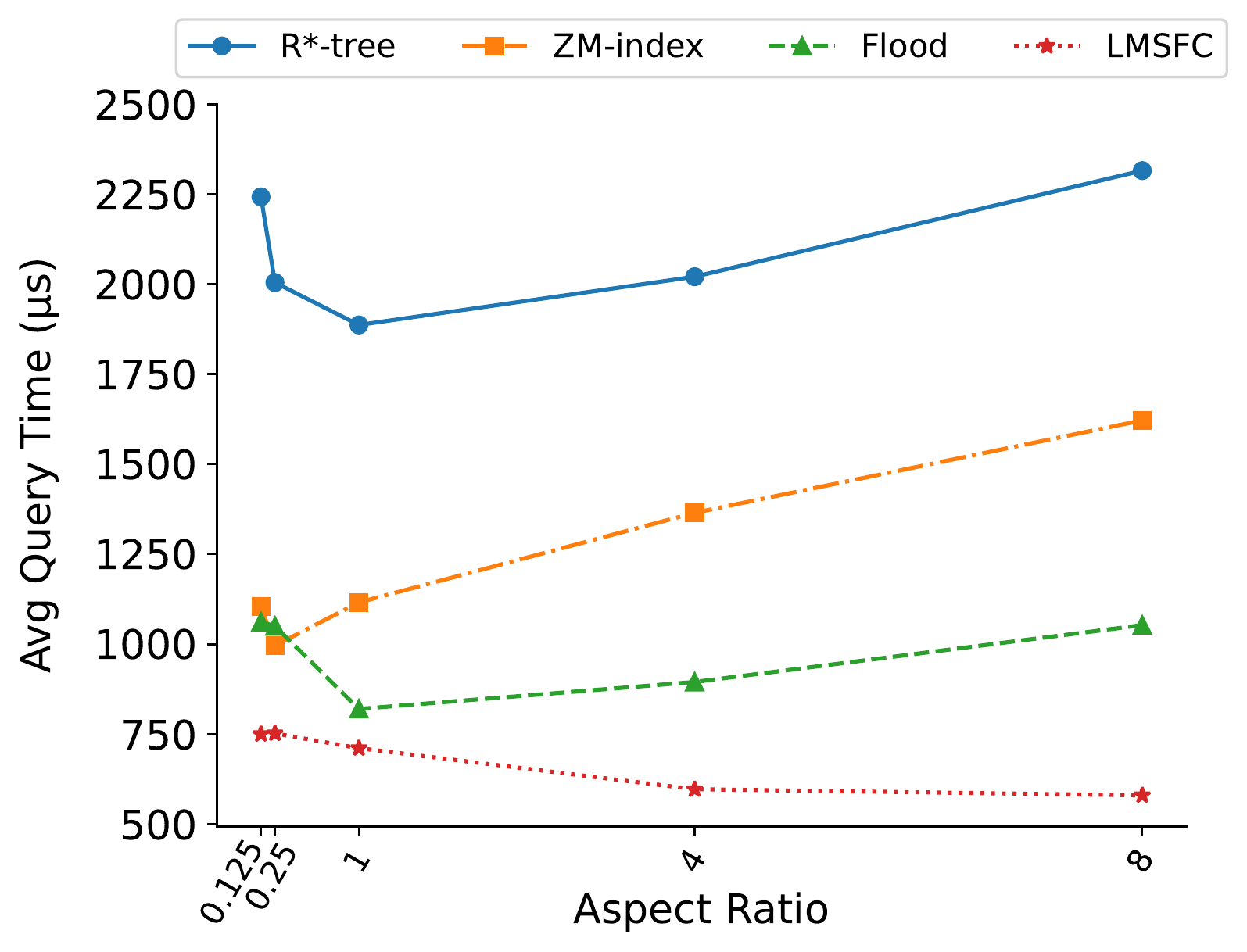}
    }
   
    \caption{Different Aspect Ratio}
    \label{fig:aspectratio}
    
\end{figure*}
\subsection{Aspect Ratio}

We investigate how the performance varies with the aspect ratio of the query
window. We fix the selectivity to be 1\% and then vary the aspect ratio from
0.125 to 8.0. %
The aspect ratio is defined as the ratio of the width of two dimensions of the
query window. For datasets of more than 2D (i.e., NYC and STOCK), we randomly
select one dimension, called \emph{variable dimension}, for a given dataset to
enforce the aspect ratio. We then start with a query window of equal size on all
dimensions, and then modify the length of the variable dimension to satisfy the
ratio constraint. Finally, with the aspect ratio fixed, we scale the query
window to enforce the same selectivity.
For example, the three sides of a 3D query window with ratio of 4 and 0.25 will
have a side length ratio of 4:1:1 and 0.25:1:1, respectively, assuming the first
dimension is the variable dimension.

As shown in Figure~\ref{fig:aspectratio}, \MODELNAME offers the fastest query
speed among all the indexes. The two learned indexes, \MODELNAME{} and \Flood{}
show much more stable performance than the other two non-learned indexes,
demonstrating that learned indexes can adapt well to the query workload to
achieve consistent and superior performance. Furthermore, we notice that
\MODELNAME{} outperforms \Flood{} across all settings, especially in the STOCK
dataset, which is partly due to the fact that \Flood{} has to learn a
$(d-1)$-dimensional grid, which is harder for larger $d$. Finally, we notice
that there are cases where non-learned indexes behave significantly worse even
for ``symmetric'' aspect ratios. E.g., on the NYC dataset, \rstartree{}'s
performance is almost 3x at aspect ratio of 8.0 as compared with that at aspect
ratio of $\frac{1}{8.0}$, demonstrating the need for learned indexes for \md{}
datasets.

\begin{figure}[ht]
  \centering
  \includegraphics[width=0.85\linewidth]{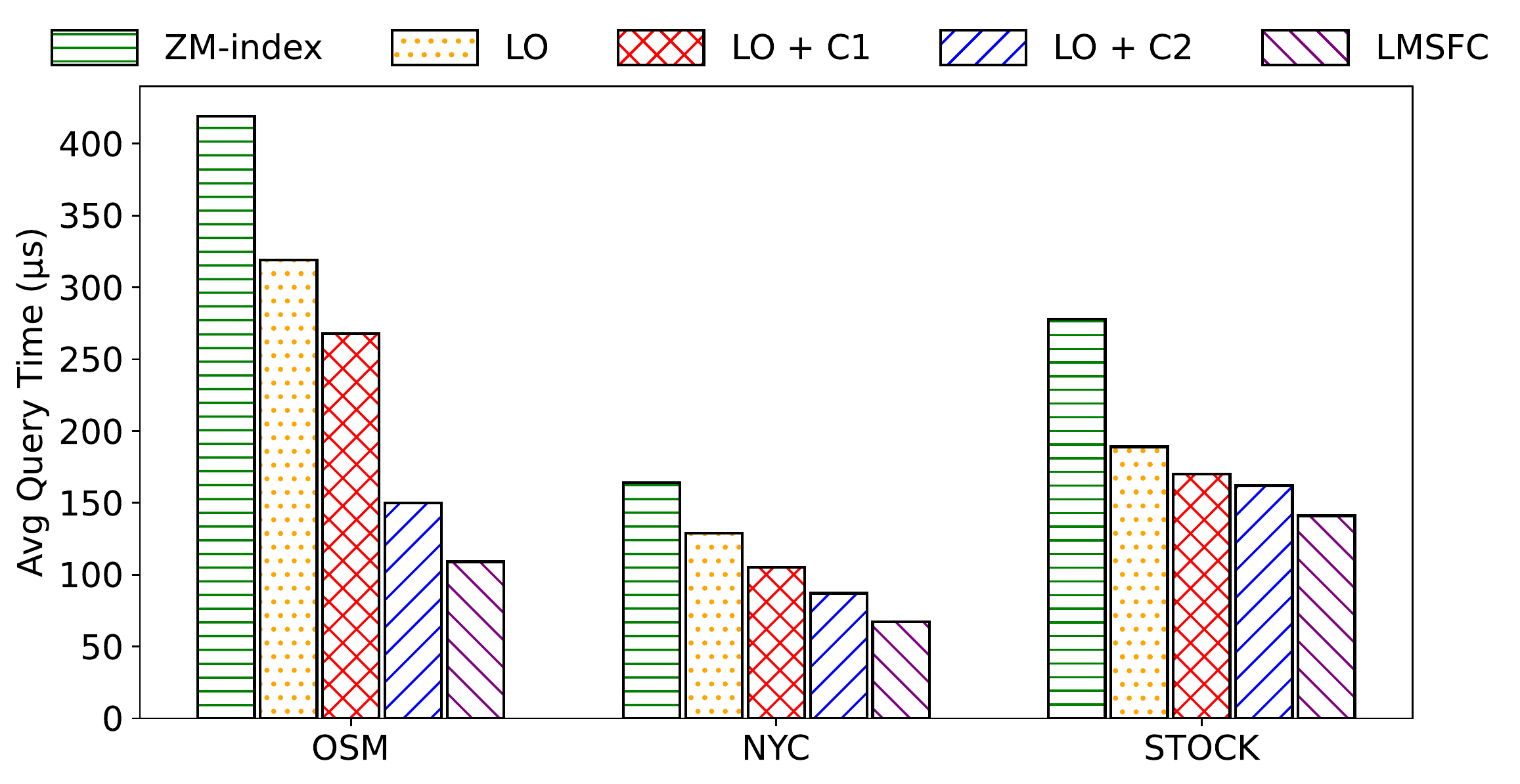}
 
  \caption{Ablation Study}
  \label{fig:ablation}
\end{figure}

\subsection{Ablation Study}

In this section, we investigate the impact of different optimization components
in \MODELNAME on the performance by performing ablation studies. 

We compared the following variants of the proposed method: 
\begin{enumerate}[$\bullet$]
\item \ZM. This baseline uses the fixed \zorder curve with a learned index, with
  fixed-size paging.
\item \textsf{LO}. We replace the \zorder{} in \ZM{} by our learned SFC.
\item \textsf{LO + C1}. On top of \textsf{LO}, we add the sort dimension
  optimization (SD). 
\item \textsf{LO + C2}. On top of \textsf{LO + C1}, we add the Recursive Query
  Splitting (RQS) optimization.
\item \MODELNAME. This is our proposed method, which has Dynamic Programming
  Paging (DP) optimization added to \textsf{LO + C2}. 
\end{enumerate}

Figure~\ref{fig:ablation} illustrates that adding more components can
consistently and continuously improve the baseline model. Across all datasets,
learned \zorder (\textsf{LO}) almost achieves the biggest improvement on \ZM.
This is because learned \zorder has the ability to adapt given query workload
via optimizing.

We notice that \textsf{LO + C1} and \textsf{LO + C2} cannot improve too much on
STOCK dataset since the \md data is too sparse in the higher dimensionality
dataset. As they still use fixed-size paging, points in each page are more
scattered and thus form a much larger MBR. As a result, sort dimension
optimization cannot skip too many points and more pages intersect with the query
window. Note that \MODELNAME{} replaces the fixed-size paging with DP-based
paging, and the above issue is alleviated, hence the noticeable performance
improvement.

\subsection{Query Splitting}

\begin{table}[htbp]% h asks to places the floating element [h]ere.
  \small
  \caption{Recursive Query Splitting (\RQS) vs FindNextZaddress (\FNZ)}
  \label{tab:hqp-vs-fnz}
  \centering{}
 
  \begin{tabular}{lrrr}
    \toprule
    Model             & Index Accesses & Avg Query Time ($\mu s$) \\
    \midrule
    \ZM + \RQS        & 18           & 306                  \\
    \ZM + \FNZ        & 1807         & 380                  \\
    \midrule
    \MODELNAME + \RQS & 19           & 109                  \\
    \MODELNAME + \FNZ & 1927         & 206                  \\
    \bottomrule
  \end{tabular}
   
\end{table}

 We investigate the impact of different query splitting strategies. Existing
Z-order-based indexes either do not use any query splitting~\cite{ZM, flood} or
use the query splitting method via repeated invocation of the
$\fn{FindNextZaddress}$ (\FNZ) function first proposed
in~\cite{tropf1981multidimensional} and had been used in
UB-tree~\cite{DBLP:conf/vldb/RamsakMFZEB00}. We name the recursive query
splitting method in our proposal as the \RQS{}.

We experiment with the two splitting strategies on both \ZM{} and \MODELNAME and
show the results in Table~\ref{tab:hqp-vs-fnz}, where ``index accesses'' record
the average number of times the forward index is accessed to perform the
\zaddress{} to page lookup (See explanation in
Section~\ref{sec:query-processing}). %
We can see that our \RQS{} outperforms \FNZ{} and the improvement is especially
significant for \MODELNAME{}. This is mainly because \FNZ{} is invoked for every
page that intersects the query and hence causes great overhead.

  \begin{table}[htbp]% h asks to places the floating element [h]ere.
  \small
  \caption{Effect of different $k_{\text{maxsplit}}$}
 
  \label{tab:split}
  \centering{}

  \begin{tabular}{rrrr}
    \toprule
    $k_{\text{maxsplit}}$  & Avg Irrelevant Pages & Avg Query Time ($\mu s$) \\
    \midrule
    0        & 16991          & 150                  \\
    1        & 7531           & 126                  \\
    2        & 4359           & 117                  \\
    3        & 2478           & 112                  \\
    4        & 1288           & 109                  \\
    5        & 523            & 113                  \\
    \bottomrule
  \end{tabular}
   
\end{table}
  We further investigate the effects on different $k_{\text{maxsplit}}$.
  Table~\ref{tab:split} shows we can achieve the best average query performance
  when $k_{\text{maxsplit}} = 4$. Although larger $k_{\text{maxsplit}}$ can
  avoid scanning considerable irrelevant pages, the query performance slightly
  degrades. This is because the query window has been divided into too many
  parts, which significantly increases the overhead on accessing the learned
  index.

\subsection{Paging Methods}

\begin{table}[htpb]
  \small

  \caption{Comparing Different Paging Methods (FP, HP, and DP stands for
    fixed-size paging heuristic paging and dynamic programming paging,
    respectively)}
  \label{tab:PMC}
  \centering
 
  \begin{tabular}{l|rrr}
    \toprule
    Model           & Avg Query Time ($\mu s$) & Index Size (MB) \\\midrule
    \ZM{} + FP     & 419                  &  6.8       \\
    \ZM{} + HP     & 330                  &  8.5       \\
    \ZM{} + DP     & 309                  &  8.8       \\\midrule \MODELNAME +
    FP & 150                  &  6.8       \\
    \MODELNAME + HP & 116                  &  7.4       \\
    \MODELNAME + DP & 109                  &  7.7       \\\bottomrule
  \end{tabular}
 
\end{table}

 We investigate different paging methods on the OSM dataset, and the results are
shown in Table~\ref{tab:PMC}. %
For both \ZM{} and \MODELNAME{}, DP paging shows the best query performance
since DP can minimize our score function, which intuitively corresponds to
relatively densely packed pages.

This helps to reduce the dead space within the page as well as decrease the
probability of overlapping with the queries. Note that HP is only slightly worse
than DP, but typically with much faster packing time (e.g., 35 seconds for HP
versus 546 seconds for DP).

\subsection{Index Learning Process}
\label{exp:training process}

\begin{figure}[htpb]
  \begin{minipage}[t]{0.45\linewidth}
    \centering
    \includegraphics[width=\linewidth]{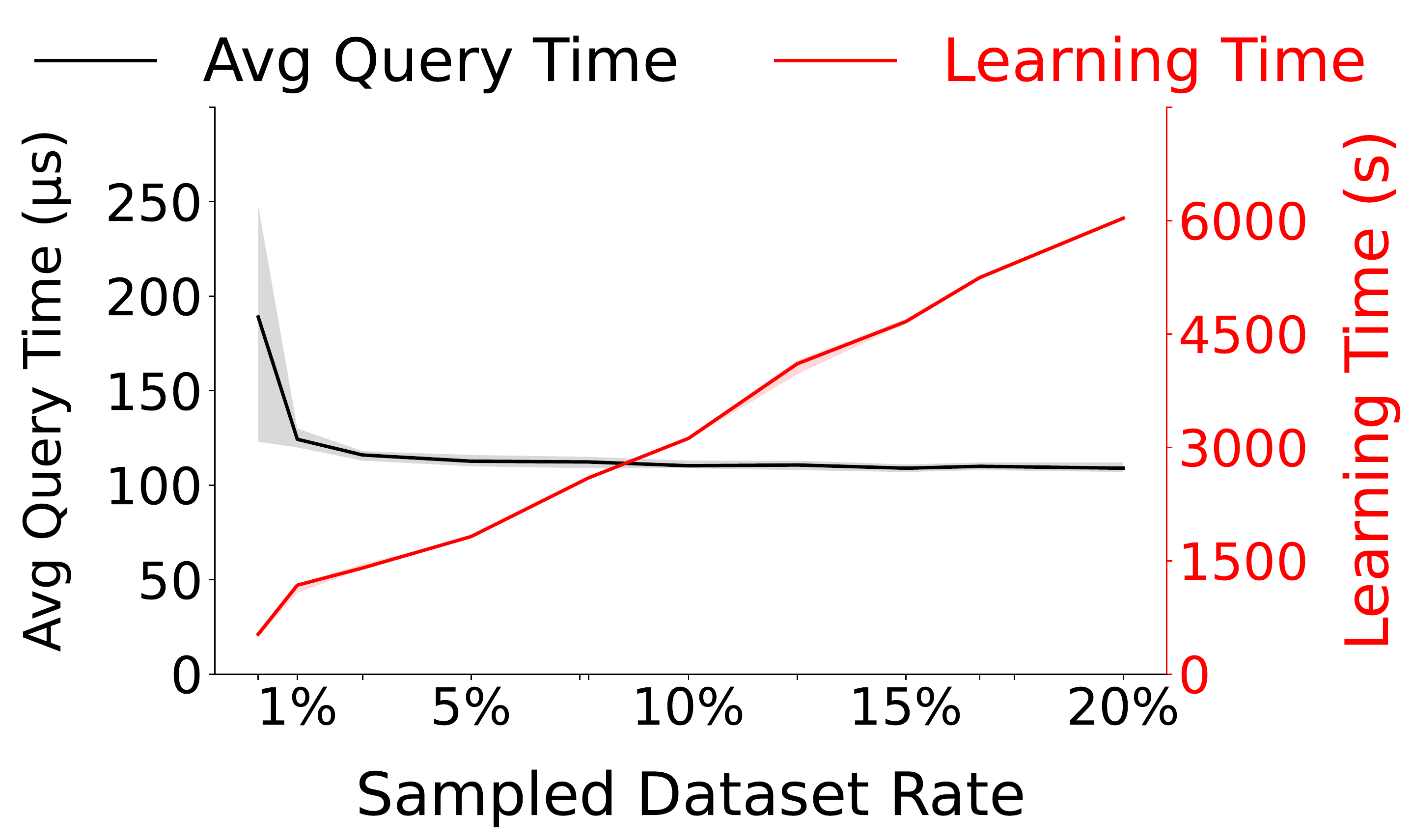}
   
    \caption{Varying Dataset Size}
    \label{fig:learning_processa}
  \end{minipage}
  \hspace{0.05\linewidth}
  \begin{minipage}[t]{0.45\linewidth}
    \centering
    \includegraphics[width=\linewidth]{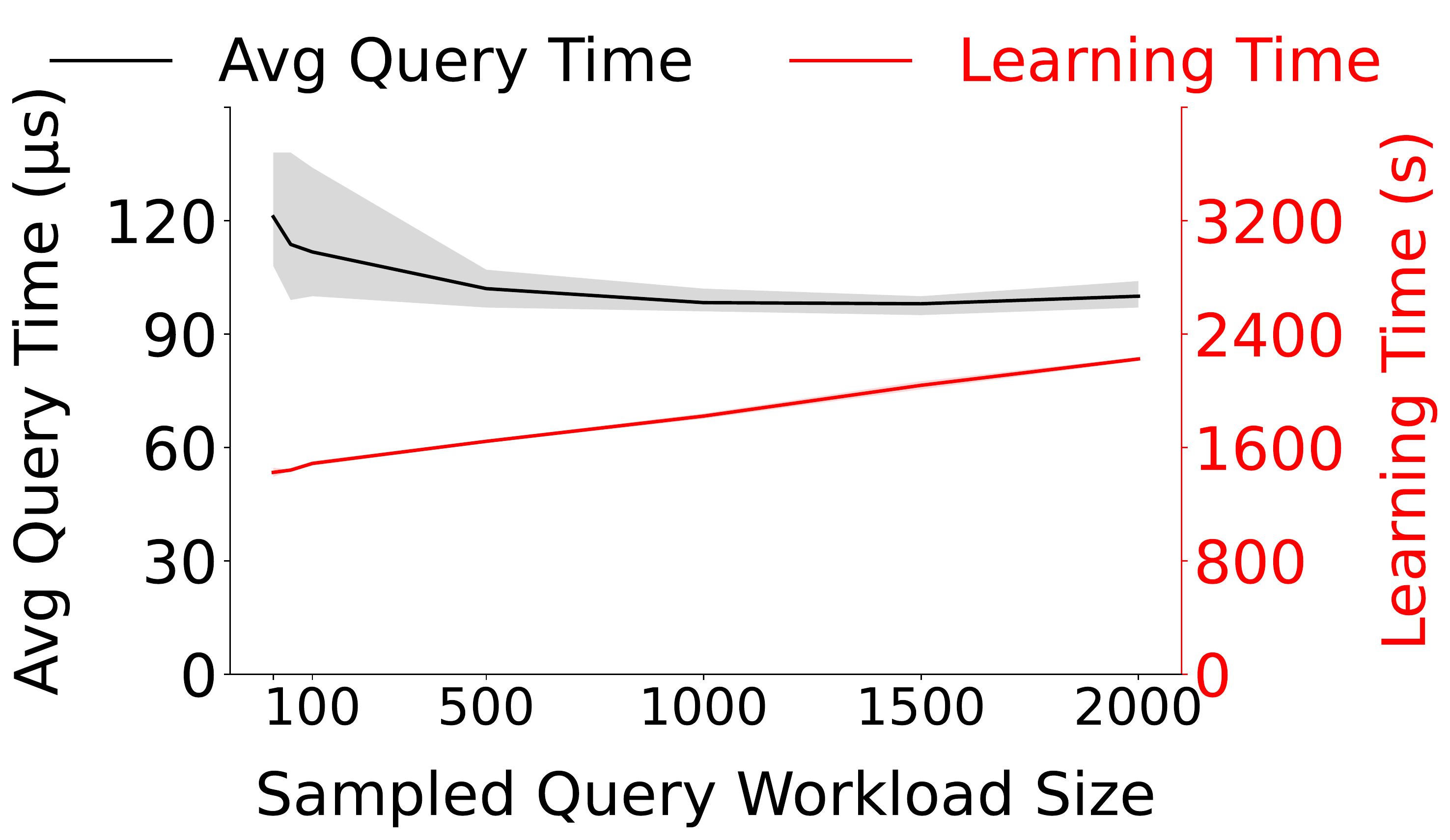}
   
    \caption{Varying Workload Size}
    \label{fig:learning_processb}
  \end{minipage}%
   
\end{figure}

 Learning a good SFC using the entire dataset and query workload is infeasible
due to the prohibitively long time on sorting the dataset according to the given
SFC. Thus, we use sampled datasets and workloads in the learning process, which
can significantly reduce the learning cost without significantly degrading the
query performance. Figure~\ref{fig:learning_processa} and
Figure~\ref{fig:learning_processb} illustrate the learning cost and query
performance on OSM (other datasets show a similar trend) via varying different
sample dataset sizes and query workloads sizes over several trials (minimal and
maximum result is shaded). In Figure~\ref{fig:learning_processa}, when we sample
a small portion of the dataset, the performance has a large variance. Although a
larger sample rate can achieve better performance, the learning process is quite
long. Thus, adopting a 2.5\%-7.5\% sampled rate is good enough to maintain both
fast query time and low learning cost. 
Even if we reduce the sample rate to 0.5\%, we can still achieve better
performance than \Flood{} but incur only half the learning time.

Based on the 5\% sampled dataset, we conduct the experiment on varying query
workload sizes to observe whether a large workload size can achieve better query
performance. The results are displayed in Figure~\ref{fig:learning_processb}.
When a workload size is larger than 500, we can achieve robust performance.

\subsection{Index Size and Index Construction}

\begin{table}[htpb]
  \small
  \caption{Index Size (MB)}
   
  \label{tab:indexsize}
  \centering
    \begin{tabular}{l|rrr}
      \toprule
                  & OSM    & NYC   & STOCK \\\midrule \rstartree & 26.7   & 9.8
      & 8.9   \\
      Flood        & 0.9    & 0.2   & 0.4   \\
      \ZM{}        & 6.8    & 1.6   & 2.6   \\
      \MODELNAME   & 7.7    & 2.0   & 4.4   \\\bottomrule
    \end{tabular}
 
\end{table}

 In Table~\ref{tab:indexsize}, we report the index sizes for the three datasets. 
All the index sizes are small relative to the respective data size. Our proposed
\MODELNAME{} has a larger index size than \ZM{} or \Flood{} partly because we
have optimized page layouts so that pages are not fully filled. Nonetheless, our
index size is still acceptable as it is still significantly smaller than the
traditional \rstartree{}. 

We present the index construction times in Table~\ref{tab:indexcreation}. For
learned indexes, we further distinguish the learning time and the index building
time. 
\rstartree suffers from high index construction time in the large dataset
because its construction requires optimizing some criteria (e.g., dead space,
margin, the overlap between two pages' MBR) for each page. 
\ZM{} has the fastest construction time as there is no learning or optimization
involved. \Flood{} has faster learning and building times than \MODELNAME,
because 
\begin{inparaenum}[(i)]
\item \Flood{}'s model is simpler in that the hyper-parameter space is much
  smaller than ours. In addition, it also optimizes against a learned cost
  model, hence the hyper-parameter search is faster. 
\item Our \MODELNAME{} also includes other optimizations (such as dynamic
  programming-based paging), hence affecting the index building time. 
\end{inparaenum}
Nonetheless, the index construction is done once for a dataset. 

Similar to Flood, if we can collect the training examples from history, we can
train an offline cost model to select the learned SFC with low overhead. We use
history instances as training examples to fit a neural network then we freeze
the parameters of the model. During learning SFC, we can utilize gradient
descent to directly adjust the input to find the optimum. Consequently, the
learning process only takes a few minutes and the performance is competitive
with the proposed BO algorithm.

\begin{table}[tb]% h asks to places the floating element [h]ere.
  \small
  \caption{Index Learning and Construction Times (Seconds)}
  \label{tab:indexcreation}
 
  \centering
  \begin{tabular}{l|rrr}
    \toprule
                        & OSM  & NYC  & STOCK \\\midrule \rstartree          &
    9651 & 708  & 864   \\\midrule \ZM{}               & 35   & 5    & 5
    \\\midrule Flood Learning      & 73   & 121  & 431   \\
    Flood Building      & 44   & 6    & 10    \\\midrule \MODELNAME Learning &
    1821 & 672 & 879  \\
    \MODELNAME Building & 546  & 87   & 117   \\
    \bottomrule
  \end{tabular}
\end{table}

\subsection{Handling Data Updates}
\begin{figure}[htpb]
  \begin{minipage}[t]{0.475\linewidth}
    \centering
    \includegraphics[width=\linewidth]{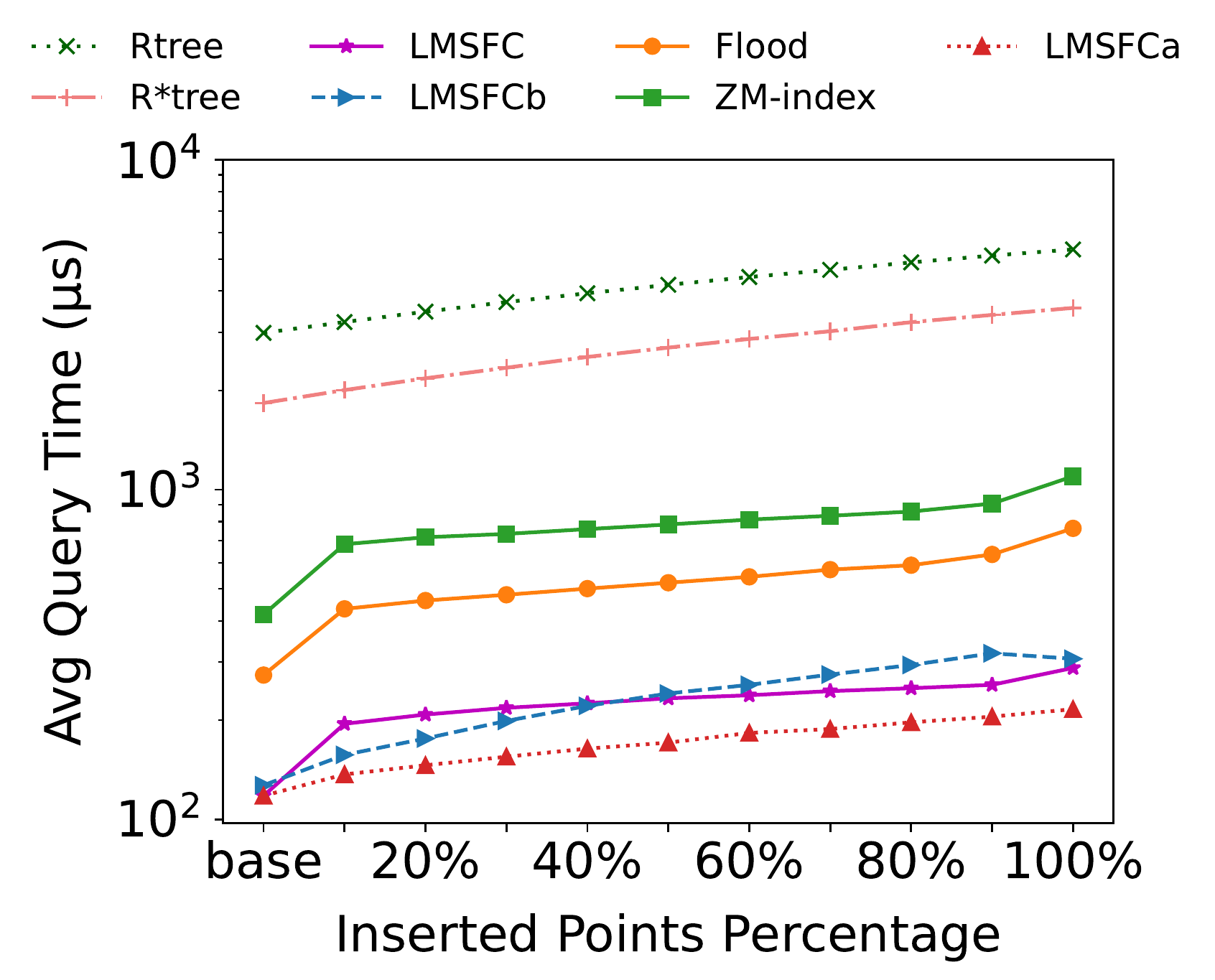}
    \caption{Query Time vs Insert Points Percentage on OSM dataset}
    \label{fig:updatea}
  \end{minipage}
  \hspace{0.05\linewidth}
  \begin{minipage}[t]{0.455\linewidth}
    \centering
    \includegraphics[width=\linewidth]{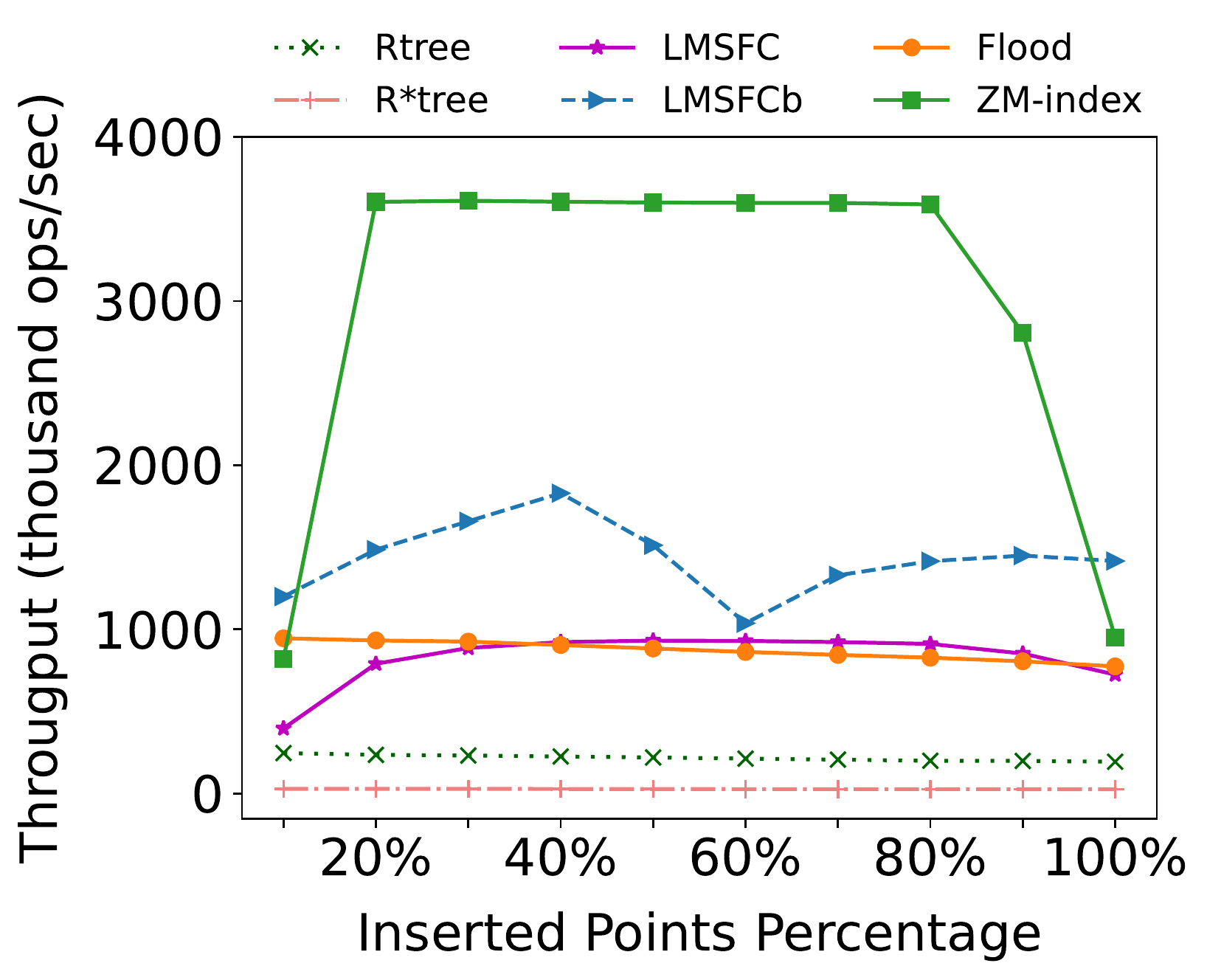}
    \caption{Throughput vs Insert Points Percentage on OSM dataset}
    
    \label{fig:updateb}
  \end{minipage}%
\end{figure}
 Our method can also handle updates by employing an updatable forward index
(e.g., ALEX~\cite{ding2020alex} or LIPP~\cite{DBLP:journals/pvldb/WuZCCWX21}).
In fact, we can employ the traditional $B^+$-tree index as the forward index as
well.

In this section, we adopt ALEX as the forward index since PGM has not an
inherent update method\footnote{PGM adopts LSM-based methods to
handle updates. Although the LSM tree shows good write performance, it performs
worse in range queries because all sorted runs in the LSM tree are
scanned~\cite{DBLP:conf/sigmod/SarkarA22}.} . Deletion can simply
mark a record as "deleted". We only discuss insertion here. Once receiving an
insertion query, LMSFC first locates the data page where the newly inserted
point should be located according to its mapped value. If there is still space,
we insert this point into the page and update the meta-data. Otherwise, we
evenly separate the page into two based on the \zaddress and we also need to
update the meta-data of two pages. We initialize and learn \MODELNAME with the
128M sampled data from the OSM dataset. We compare with two R-tree variants,
\rstartree and R-tree with the linear split method (denoted as R-tree), and we
also implement an updatable Flood and \zm (we also use ALEX to handle data
updates) for comparison. We insert 10\%n to 100\%n data points ($n = 128M$) and
we test the window query performance and report the average query time and
throughput of the insertion query workload. 

The results of query performance after each insertion are shown in
Figure~\ref{fig:updatea}. We observe that updatable learned indexes still show
better query time than traditional index models. LMSFC achieves the best query
performance than other index models. We recognize that while such handling of
updates is feasible, it may deteriorate the quality of our index gradually over
the long run. Thus, after each 10\% data insertion, we trigger a learning
process to find better learned ordering in LMSFC, named LMSFCa. Thus, the
performance can be further improved. Although this incurs extra learning
overhead, the learning phase can happen in a separate instance in a real-world
system. Thus, the learning phase does not block the incoming queries. 

Figure~\ref{fig:updateb} demonstrates high throughput on the learned indexes.
Due to the heuristic split method, R-tree variants show lower throughput. We
observe the low throughput of LMSFC when we insert the first 10\% data. This is
because lots of overflow pages need to be split. We can find a similar trend in
\zm, but \zm shows the highest throughput due to the fact that \zm does not
include any optimization in each page (i.e., page-level sorting). 

Another orthogonal idea that supports the insertion in the learned index is to
maintain a delta index to handle data insertions and delay the merge/split
process ~\cite{DBLP:conf/ppopp/TangWDHWWC20}. We borrow the idea of delayed
update policy and design an LMSFC variant named LMSFCb. In each data page, we
maintain a low-capacity (i.e., half of page size) unsorted delta array to handle
overflow in each page. Once the size of the delta array exceeds the predefined
threshold, we trigger the same splitting method that the updatable LMSFC does.  
The advantage of this method is that the delayed update policy can improve the
insertion throughout while spending a low scan cost of the delta array.
Therefore, Figure~\ref{fig:updatea} and Figure~\ref{fig:updateb} report that
LMSFCb achieves competitive query performance and better insertion throughput.

\section{conclusions and Future Work}
\label{sec:conclusion}

In this paper, we study the problem of learned indexes for \md{} data based on
learned space-filling curves. We devised a framework of learning a special class
of space-filling curves that is amenable to efficient query processing. In
addition, we perform both offline and online optimizations by optimizing the data
placement into pages and query splitting to further improve the query processing
efficiency. Extensive experimental results demonstrate that the proposed method
outperforms both non-learned indexes (such as \rstartree{}) and prior state-of-the-art
learned \md{} indexes (such as \ZM{} and \Flood{}) across a wide range of settings
on three real-world datasets.

Although we focus on learned monotonic SFCs in this paper, our idea and methods 
can be easily generalized to obtain learned non-monotonic SFCs. For example, by dropping the constraints on $\theta$, our method can learn a
non-monotonic SFC. For another example, we can consider other parameterized SFC
families that generalize other well-known SFCs, such as the Hilbert Curve.
We leave such exploration for future work.

\begin{acks}
Wei Wang was supported by HKUST(GZ) Grant G0101000028, GZU-HKUST Joint Research Collaboration Grant GZU22EG04, and Guangzhou Municipal Science and Technology Project (No.\ 2023A0
3J0003). Xin Cao was supported by ARC DP230101534. Jian Gao would like to express his heartfelt gratitude to his girlfriend (Shiman Wu) for her unwavering support throughout the completion of this research paper.
\end{acks}
%\clearpage
\bibliographystyle{ACM-Reference-Format}
\bibliography{learned_index}

\end{document}